\def\BibTeX{{\rm B\kern-.05em{\sc i\kern-.025em b}\kern-.08em
    T\kern-.1667em\lower.7ex\hbox{E}\kern-.125emX}}
\newcommand{\des}{\mathrm{des}}
\newcommand{\rect}{\mathrm{rect}}
\newcommand{\nom}{\mathrm{nom}}
\newcommand{\R}{\mathbb{R}}
\newcommand{\Rn}{\mathbb{R}^n}
\newcommand{\Rm}{\mathbb{R}^m}
\newcommand{\Rnm}{\mathbb{R}^{n \times m}}
\newcommand{\Sc}{\mathcal{S}}
\newcommand{\x}{\mathbf{x}}
\newcommand{\y}{\mathbf{y}}
\newcommand{\X}{\mathbf{X}}
\newcommand{\uu}{\mathbf{u}}
\newcommand{\unom}{\uu_{\mathrm{nom}}}
\newcommand{\urect}{\uu_{\mathrm{rect}}}
\newcommand{\xquery}{\x_q}
\newcommand{\xnoisy}{\underline{\x}_q}
\newcommand{\pipex}{\bigg|_{ \xquery}}
\newcommand{\pipemu}{\bigg|_{ \bm{\mu}}}
\newcommand{\pipemuk}{\bigg|_{ \mu_k}}
\newcommand{\Kbar}{\mathbf{ \overline{K} }}
\newcommand{\KbarInv}{\mathbf{ \overline{K} \hspace{0.05cm} }^{-1}}
\newcommand{\hgp}{ h_{\mathrm{gp}} }
\newcommand{\hb}{ h_{\mathrm{b}} }
\newcommand{\hu}{ h_{\mathrm{u}} }
\newcommand{\dhgpdx}{\frac{\partial \hgp (\x) }{\partial \x}}
\newcommand{\dmudx}{\frac{\partial \mu (\x) }{\partial \x}}
\newcommand{\dvardx}{\frac{\partial \sigma^2 (\x) }{\partial \x}}
\newcommand{\dkdx}{\frac{\partial \mathbf{k} (\x) }{\partial \x}}
\newcommand{\dkidx}{\frac{\partial \mathbf{k}_{(i)} (\x) }{\partial \x}}
\newcommand{\ddkiddx}{\frac{\partial^2 \mathbf{k}_{(i)} (\x) }{\partial \x^2}}
\newcommand{\dqdx}{\frac{\partial \mathbf{q} (\x) }{\partial \x}}
\newtheorem{theorem}{Theorem}
\newtheorem{remark}{\textit{Remark}}
\theoremstyle{definition}
\newtheorem{definition}{Definition}
\newtheorem{proposition}{Proposition}
\theoremstyle{problem}
\newtheorem{problem}{Problem}
\theoremstyle{assumption}
\newtheorem{assumption}{Assumption}
\definecolor{1}{RGB}{255, 240, 0}
\newtheoremstyle{colth}%
{}{}%
{\itshape}{}%
{}{}%
{ }%
{\colorbox{1}{\bf{\thmname{#1}\thmnumber{ #2}} \thmnote{(#3)}.}}
\theoremstyle{colth}
\DeclareMathOperator*{\argmin}{arg\,min}
\newcommand\norm[1]{\left\lVert #1 \right\rVert}
\newlength\mylen
\let\oldnl\nl
\newcommand{\nonl}{\renewcommand{\nl}{\let\nl\oldnl}}
\definecolor{green}{rgb}{0,1,0}
\definecolor{lightblue}{rgb}{0,0,0.8} 
\definecolor{red}{rgb}{1,0,0}
\newcommand{\greensquare}{{\color{green}$\blacksquare$}}
\newcommand{\bluecirc}{{\color{lightblue}$\circ$}}
\newcommand{\reddisc}{{\color{red}$\bullet$}}
\begin{document}

\title{Gaussian Control Barrier Functions : A Non-Parametric Paradigm to Safety}

\author{Mouhyemen Khan, Tatsuya Ibuki$^{\dag}$, Abhijit Chatterjee
\thanks{Mouhyemen Khan and Abhijit Chatterjee are with the School of Electrical and Computer Engineering, Georgia Institute of Technology, Atlanta, USA: {\tt \{mouhyemen.khan, abhijit.chatterjee\}@gatech.edu } \linebreak
$\dag$ Tatsuya Ibuki is with the Department of Electronics and Bioinformatics, School of Science and Technology, Meiji University, Kanagawa 214-8571, Japan: {\tt ibuki@meiji.ac.jp }
}
}

\maketitle

\begin{abstract}
Inspired by the success of control barrier functions (CBFs) in addressing safety, and the rise of data-driven techniques for modeling functions, we propose a non-parametric approach for online synthesis of CBFs using Gaussian Processes (GPs).
Mathematical constructs such as CBFs have achieved safety by designing a candidate function a priori. However, designing such a candidate function can be challenging. A practical example of such a setting would be to design a CBF in a disaster recovery scenario where safe and navigable regions need to be determined. The decision boundary for safety in such an example is unknown and cannot be designed a priori. 
In our approach, we work with \textit{safety samples or observations} to construct the CBF online by assuming a flexible GP prior on these samples, and term our formulation as a \textit{Gaussian CBF}. 
GPs have favorable properties, in addition to being non-parametric, such as analytical tractability and robust uncertainty estimation. This allows realizing the posterior components with high safety guarantees by incorporating variance estimation, while also computing associated partial derivatives in closed-form to achieve safe control. 
Moreover, the synthesized safety function from our approach allows changing the corresponding safe set arbitrarily based on the data, thus allowing non-convex safe sets.
We validate our approach experimentally on a quadrotor by demonstrating safe control for fixed but arbitrary safe sets and collision avoidance where the safe set is constructed online. 
Finally, we juxtapose Gaussian CBFs with regular CBFs in the presence of noisy states to highlight its flexibility and robustness to noise.
The experiment video can be seen at: \url{https://youtu.be/HX6uokvCiGk}.
\end{abstract}

\begin{keywords}
Control Barrier Functions, Gaussian Processes, Non-parametric, Safety-critical Control
\end{keywords}

\maketitle

\section{INTRODUCTION}
With the rise of autonomous systems, assuring their safety is of paramount importance. For instance, an autonomous drone should not crash during its mission, or a self-driving vehicle should not collide with other vehicles. Formulating constraints for these applications to ensure safety is a difficult task. These intelligent systems often use data-driven solutions for learning and adapting online. However, it is unclear how to principally and efficiently encode data into the paradigm of safety.
A popular approach to ensuring safety of dynamical systems leverages set theoretic ideas. To be more specific, the theory of controlled set invariance is employed where a system is defined to be safe if (a subset of) its states remain within a prescribed set \cite{controls_sets_Blanchini2008}. This forms the basis of control barrier functions (CBFs) which have been successfully demonstrated on many safety-critical applications \cite{cbf_Ames2017}, \cite{cbf_safe_swarm_quadrotors_Li2017}, \cite{cbf_biped_Hsu2015}. To incorporate safety using CBFs, we need two items: a candidate function or certificate satisfying the required relative degree and a nominal model of the system dynamics. The candidate function defines a superlevel set in which the system remains forward invariant subject to certain constraints. Traditionally, these safety functions have been hand-designed. However, depending on the application, designing such a candidate CBF is not straight forward in many practical settings. To highlight the difficulty of a hand-designed solution, consider the example of a disaster recovery scenario. Designing a safety function manually for this scenario requires great effort and intuition to find an appropriate form. This can compromise system safety if designed incorrectly. We believe designing the safety candidate function based on sampled data in a principle and efficient manner will alleviate many of these concerns. To this end, we propose working with \textit{safety samples, which encode a safety metric of interest, to synthesize the safety function in a data-driven manner using Gaussian Processes (GPs) and define it as a Gaussian CBF} (see Figure \ref{fig:gcbf_vs_cbf_contours}).

CBFs achieve safe control using barrier certificates \cite{cbf_barrier_certificates_prajna2007, cbf_constructive_wieland2007}. The system's safety is encoded using these safety barrier certificates (or safe sets) with the aid of a smooth function satisfying certain properties. These functions can then be combined with quadratic programs (QPs) to achieve safety constrained control \cite{cbf_Ames2017, cbf_safe_swarm_groundbots_Li2015, cbf_safe_swarm_quadrotors_Li2017, so3_conic_cbf_Tatsuya2020}. Certificates based on Lyapunov and barrier functions were combined to demonstrate stable and safe constrained control \cite{cbf_2dquad_Wu2016, cbf_3dquad_Wu2016}. These approaches incorporate deterministic CBFs defined a priori without incorporating any form of data for altering the CBF. 

Learning based methods for addressing safety have been investigated previously. For an uncertain nonlinear system, the region of attraction is learned by using non-parametric GPs and Bayesian optimization (BO) to estimate and expand the safe set in \cite{safe_learning_roa_felix2016} . BO was also used in safety-critical systems such as quadruped, snake, and quadrotor for improving system performance while ensuring safety \cite{bo_gait_lizotte2007}, \cite{bo_snake_matthew2011}, \cite{bo_quadrotor_berkenkamp2016}. 
Unfortunately, the high run-time complexity of BO limits its applicability to evolving tasks or changing environmental conditions. For systems with polynomial dynamics, an optimization routine can be set up as a convex semi-definite problem using sum-of-squares (SoS) technique to search for a valid safety certifcate \cite{opt_sos_ahmadi2016, opt_sos_ahmadi2019}. However, SoS methods also scale poorly with high dimensions, similar to BO, and are limited to polynomial system dynamics.

In the context of CBFs, data-driven techniques are actively pursued. Support vector machines (SVMs) were used in \cite{cbf_supervisedml_Srinivasan2020} to parameterize CBFs with the help of sensor measurements. Carefully designed weights are required in \cite{cbf_supervisedml_Srinivasan2020} for the SVM classifier to work and the study is confined to simulation results. Data in the form of expert demonstrations was used in \cite{cbf_expertdata_Robey2020}, \cite{cbf_robust_hybrid_Robey2021} to generate CBFs. However, in many applications, having access to expert demonstrations is not always feasible. While both the papers \cite{cbf_supervisedml_Srinivasan2020, cbf_expertdata_Robey2020} empirically verify their findings, neither provide hardware experimental validation of their methods. Adaptive CBFs were formulated to handle time-varying control bounds and noise in the system dynamics \cite{cbf_adaptive_xiao2021}. CBFs have been combined with model predictive control methods for safe motion and path planning \cite{cbf_mpc_polytopes_thirugnanam2022}, \cite{cbf_mpc_racing_he2022}. For stochastic dynamical systems, stochastic CBFs are developed \cite{cbf_stochastic_clark2021}. These approaches use parametric CBFs while accounting for uncertainty in the system dynamics. The authors in \cite{cbf_robust_HJ_choi2021} use a value-function approach by combining Hamilton-Jacobi (HJ) with CBFs to maximize the safe set using viability kernels. However, the HJ based CBF is limited to low-dimensional systems.

Neural certificates, which use neural networks to construct safety barrier certificates, have been demonstrated in \cite{neural_formal_abate2020}, \cite{neural_synthesis_zhao2020}, \cite{neural_convex_tsukamoto2020}, \cite{neural_lyapunov_gaby2021}. These neural certificates provide a data-driven approach to learning-based controllers and provide formal proofs of correctness. A second-order cone program was formulated in \cite{cbf_gaussian_castaneda2021}, with GPs used for modeling the control input and dynamic model uncertainty learned in an episodic manner. However, all these studies have been confined to simulation experiments and are limited to offline training which limits their applicability in many practical online settings. The work in \cite{cbf_gaussian_castaneda2021} was limited to a particular type of kernel to satisfy affineness properties. We differ from \cite{cbf_gaussian_castaneda2021} in modeling the safety candidate function with GPs instead of the underlying system dynamics and are not limited to one type of kernel parameterization.
In our previous work \cite{cbf_gaussian_variance_Khan2021}, safety uncertainty was introduced in CBFs by augmenting the GP posterior variance with an existing CBF using data. We experimentally validated augmentation of safety uncertainty to a given CBF in hardware. However, this required a parametric CBF as the underlying safety function. Moreover, the safe set expansion in \cite{cbf_gaussian_variance_Khan2021} was limited only to convex safe sets. \textit{In this research, we consider a fully non-parametric formulation for synthesizing the safety function without requiring any parametric CBF candidate function. }
We also provide theoretical guarantees in this study.

\begin{figure}[!t]
\centering
\vspace{0.2cm}
\includegraphics[width=1\linewidth]{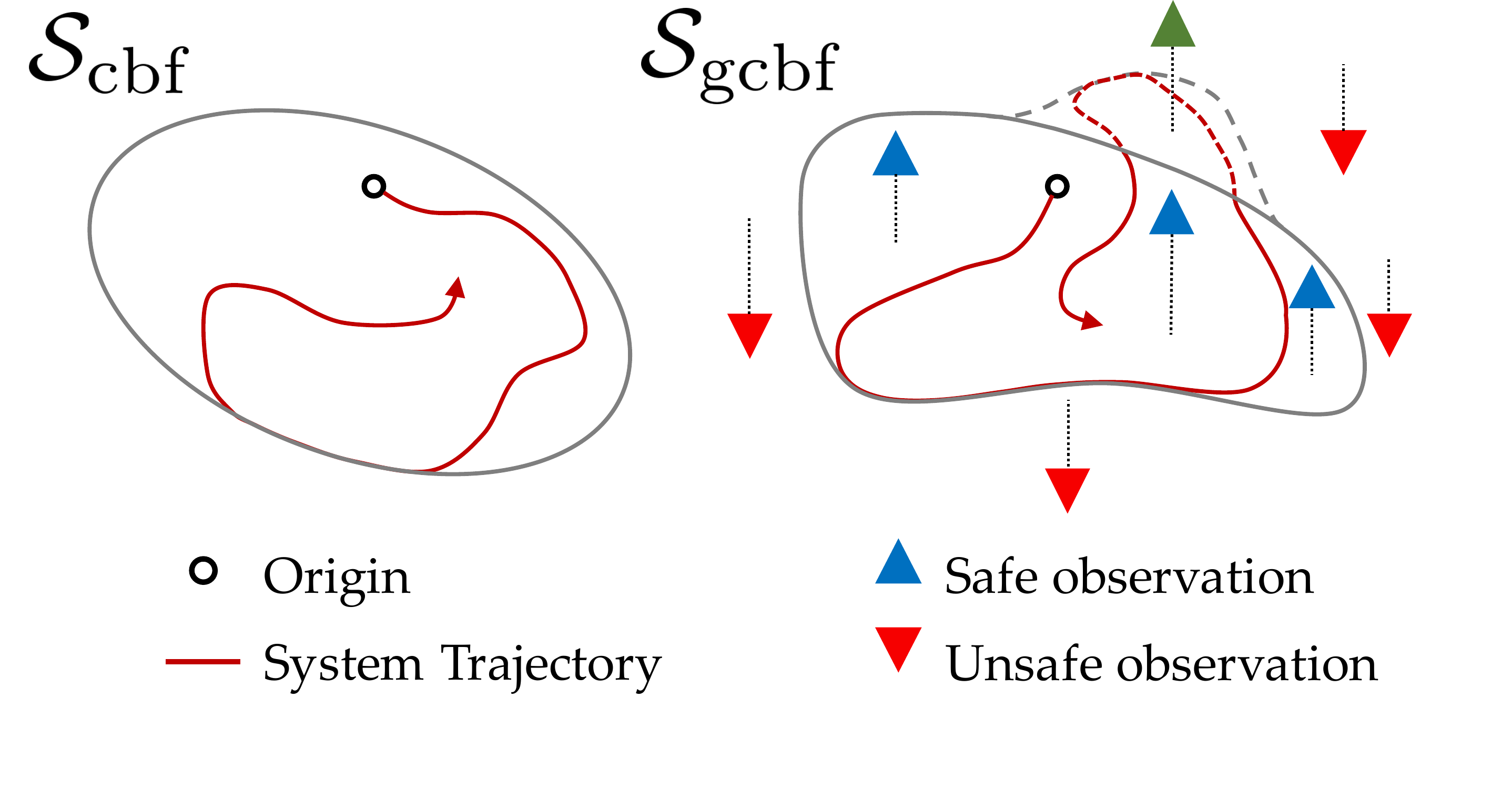}
\caption{The Gaussian CBF is non-parametric and relies on data to produce the safe sets. The $0$-level contour sets are shown for a traditional CBF (left) and Gaussian CBF (right) with safe sets $\Sc_{\mathrm{cbf}}$ and $\Sc_{\mathrm{gcbf}}$ respectively. As a new observation is received (green upper triangle), the Gaussian CBF can change the safe set based on the data in a non-convex fashion.}
\label{fig:gcbf_vs_cbf_contours}
\end{figure}

\noindent Our main \textbf{contributions} are the following. 

\begin{enumerate}
\item We present a novel approach for synthesizing CBFs in a data-driven non-parametric manner using GPs. This is achieved using safety samples as opposed to the prevailing use of CBFs which requires a function.

\item We construct Gaussian CBFs to design safe sets based on the data. These sets are not confined to convex safe sets. GPs provide favorable properties such as analytical tractability and uncertainty estimation which are key enablers in finding closed-form safety function and associated Lie derivatives with high guarantees.

\item We formulate Gaussian CBFs for safe control in the presence of noise for both the safety samples (observations to GPs) and the system states (inputs to the GPs).

\item We validate Gaussian CBFs in hardware using a quadrotor for three case studies: (i) safe control for fixed but arbitrary safe sets, (ii) online obstacle avoidance with an evolving safe set, and (iii) juxtaposing Gaussian CBFs with regular CBFs for safe control in the presence of noisy system states. 

\end{enumerate}

To the best of our knowledge, we believe this is the first work that fully synthesizes a CBF in a non-parametric data-driven manner online using GPs and validate all findings in hardware. The outline of the paper is as follows. Mathematical preliminaries are covered in Section \ref{sec:background} with the problem statement discussed in Section \ref{sec:problem}. We present our proposed methodology in \ref{sec:methodology}. The quadrotor application testcase is explained in Section \ref{sec:application}. We discuss experimental results in Section \ref{sec:experiments} followed by conclusion in Section \ref{sec:conclusion}.

\section{BACKGROUND PRELIMINARIES}\label{sec:background}
We tackle the problem of encoding safety for a dynamical system using a data-driven approach. We leverage theoretical properties of CBFs and GPs to encode safety in a probabilistic manner. We first review key results on CBFs and GPs and refer the reader to \cite{cbf_Ames2017, cbf_exponential_Nguyen2016, gp_textbook_Rasmussen2003} for more details.

\subsection{Control Barrier Function}\label{subsec:cbf}
Consider a general control affine dynamical system,
\begin{align}\label{eq:nonlinear_affine_system}
\dot{\x} &= f(\x) + g(\x) \uu,
\end{align}
where $\x(t) \in \Rn$ is the state and $\uu(t) \in \Rm$ is the control input. The drift vector field, $f : \Rn \rightarrow \Rn$, and control matrix field, $g : \Rn \rightarrow \Rnm$, are assumed to be locally Lipschitz continuous. For reasons of brevity, we omit the implicit dependence on time for the state and control input unless otherwise stated. Let safety for (\ref{eq:nonlinear_affine_system}) be encoded as the superlevel set $\Sc$ of a smooth function $h : \Rn \rightarrow \R $ as follows,
\begin{align}\label{eq:safeset}
\mathcal{S} = \{ \x \in \Rn \ | \ h(\x) \geq 0 \}.
\end{align}

\begin{definition}[Control Barrier Function \cite{cbf_Ames2017}]\label{def:cbf}
\textit{The function $h(\x) : \Rn \rightarrow \R$ is defined as a control barrier function (CBF), if there exists an extended class-$\kappa$ function $\alpha$ ($\alpha(0) = 0$ and strictly increasing) such that for any $\x \in \Sc$,
\begin{align}\label{eq:cbf_inequality}
	\sup\limits_{\uu \in \Rm }  L_f h (\x) + L_g h (\x) \uu + \alpha( h (\x)) \geq 0,
\end{align}
}
\end{definition}
\noindent where $L_f h (\x) = \frac{\partial h }{\partial \x}f(\x)$ and $L_g h (\x) = \frac{\partial h }{\partial \x}g(\x)$ are the Lie derivatives of $h(\x)$ along $f(\x)$ and $g(\x)$ respectively.

\begin{theorem}[Safety Condition \cite{cbf_Ames2017}]\label{thm:cbf}
\textit{Given a system (\ref{eq:nonlinear_affine_system}), with safe set $\Sc \subset \Rn$ (\ref{eq:safeset}), and a smooth CBF $h(\x) : \Rn \rightarrow \R$ (\ref{eq:cbf_inequality}), any Lipschitz continuous controller $\uu \in \Rm$, chosen from $\mathrm{K}_{\mathrm{cbf}} = \{ \uu \in \Rm \ | \  L_fh(\x) + L_gh(\x) \uu + \alpha(h(\x)) \geq 0 \}$ for any $\x \in \Rn$, renders the set $\Sc$ forward invariant for (\ref{eq:nonlinear_affine_system}).
}
\end{theorem}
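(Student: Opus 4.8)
The plan is to reduce the set-invariance claim to a scalar differential inequality along closed-loop trajectories and then close the argument with a comparison principle. First I would fix any Lipschitz continuous feedback $\uu(\x) \in \mathrm{K}_{\mathrm{cbf}}$ and substitute it into the dynamics (\ref{eq:nonlinear_affine_system}) to obtain the closed-loop vector field $F(\x) = f(\x) + g(\x)\uu(\x)$. Since $f$ and $g$ are locally Lipschitz by assumption and $\uu$ is Lipschitz, $F$ is locally Lipschitz, so for every initial condition $\x(0) = \x_0 \in \Sc$ there exists a unique maximal solution $\x(t)$. It therefore suffices to show that $h(\x(t)) \geq 0$ on the interval of existence, since by the definition of $\Sc$ in (\ref{eq:safeset}) this is exactly the statement $\x(t) \in \Sc$, i.e. forward invariance.

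Next I would differentiate $h$ along this trajectory. By the chain rule and the definitions of the Lie derivatives,
\begin{align*}
\frac{d}{dt} h(\x(t)) = \frac{\partial h}{\partial \x} \dot{\x} = L_f h(\x(t)) + L_g h(\x(t)) \uu(\x(t)).
\end{align*}
Because $\uu(\x) \in \mathrm{K}_{\mathrm{cbf}}$ holds at every state, the right-hand side is bounded below by $-\alpha(h(\x))$ via the membership condition (\ref{eq:cbf_inequality}). Writing $y(t) := h(\x(t))$, this yields the scalar differential inequality $\dot{y}(t) \geq -\alpha(y(t))$ with initial value $y(0) \geq 0$.

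Finally I would invoke a comparison argument. Consider the scalar initial value problem $\dot{z} = -\alpha(z)$ with $z(0) = y(0) \geq 0$. Since $\alpha$ is an extended class-$\kappa$ function with $\alpha(0) = 0$, the constant $z \equiv 0$ is a solution, and the strict monotonicity of $\alpha$ keeps $z(t) \geq 0$ for all $t \geq 0$ whenever $z(0) \geq 0$. The comparison lemma then gives $y(t) \geq z(t) \geq 0$, that is $h(\x(t)) \geq 0$, so $\x(t)$ remains in $\Sc$ and the set is forward invariant.

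The step I expect to require the most care is this comparison argument, because $\alpha$ is assumed only continuous and strictly increasing rather than Lipschitz, so the comparison ODE $\dot{z} = -\alpha(z)$ need not admit unique solutions and a naive pointwise boundary argument is insufficient. I would handle this rigorously by comparing $y$ against the minimal solution of the comparison system, for which the condition $\alpha(0) = 0$ guarantees that the origin is an equilibrium and hence an impassable lower barrier for any trajectory with $z(0) \geq 0$. The remaining ingredients, namely the chain-rule computation of $\dot{h}$ and the existence and uniqueness of the closed-loop trajectory, are routine consequences of the local Lipschitz continuity of $f$, $g$, and $\uu$.
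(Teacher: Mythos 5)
Your proof is correct, but be aware that the paper never proves this theorem itself: it is imported verbatim as a background result from \cite{cbf_Ames2017}. The only invariance argument the paper actually carries out is the proof of Proposition \ref{prop:gcbf}, and that argument takes a genuinely different route from yours. There, the inequality is exploited only on the boundary: for $\x \in \partial\Sc$ one has $h(\x) = 0$, hence $\alpha(h(\x)) = 0$, so the constraint collapses to $\dot{h} \geq 0$ on $\partial\Sc$, and Nagumo's theorem is invoked to conclude forward invariance. You instead integrate the differential inequality $\dot{y} \geq -\alpha(y)$ along the closed-loop trajectory and close with a comparison principle, and you correctly isolate the one delicate point of that route: an extended class-$\kappa$ function is merely continuous and strictly increasing, so the comparison ODE $\dot{z} = -\alpha(z)$ may lack unique solutions, forcing a comparison against the minimal solution; your observation that $\dot{z} = -\alpha(z) > 0$ whenever $z < 0$ makes the origin an impassable lower barrier and is exactly what is needed to get $z(t) \geq 0$. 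As for what each approach buys: the Nagumo route is shorter, needs the inequality only on $\partial\Sc$, and sidesteps comparison-ODE pathologies entirely, but it leans on tangent-cone machinery and the $C^1$ regularity of $h$; your trajectory-wise route is self-contained, avoids viability theory, and yields quantitative information (a lower bound on $h(\x(t))$ of class-$\mathcal{KL}$ type, hence attractivity of $\Sc$) that the boundary argument cannot provide. One caveat shared by both arguments, which you may wish to state explicitly, is that invariance is obtained only on the maximal interval of existence of the closed-loop solution; excluding finite escape time requires a separate assumption or argument.
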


As seen from Theorem \ref{thm:cbf}, CBFs are limited to systems with relative degree one, $\rho = 1$. For systems with $\rho > 1$, we look at an extension of CBFs called Exponential CBFs \cite{cbf_exponential_Nguyen2016, cbf_highorder_xiao2019}.

\begin{definition}[Exponential Control Barrier Function \cite{cbf_exponential_Nguyen2016}]\label{def:ecbf}
\textit{The smooth function $h(\x) : \Rn \rightarrow \R$, with relative degree $\rho$, is defined as an exponential control barrier function (ECBF), if there exists $\mathcal{K} \in \R^{\rho}$ such that for any $\x \in \Rn$,}
\begin{align*} 
	\sup\limits_{\uu \in \Rm } L_f^{\rho} h(\x) + L_gL_f^{\rho-1}h(\x)\uu + \mathcal{K}^{\top} \mathcal{H} \geq 0,
\end{align*}
\end{definition}
\noindent where $\mathcal{H} = [h(\x), L_fh(\x), ... , L_f^{(\rho-1)} h(\x)]^{\top} \in \R^\rho$ is the Lie derivative vector for $h(\x)$, and $\mathcal{K} = [k_0, k_1, ..., k_{\rho-1}]^{\top} \in \R^\rho$ is the coefficient gain vector for $\mathcal{H}$. $\mathcal{K}$ can be determined using linear control methods such as pole placement. We refer the reader to \cite{cbf_exponential_Nguyen2016} for proofs of ECBF forward invariance.

\subsection{Gaussian Process Regression}\label{subsec:gp}
GPs are a popular choice in machine learning for nonparametric regression which rely on kernels. Kernels furnish a notion of similarity between pairs of input points, $\x_i, \x_j \in \Rn$. However, any arbitrary function of input pairs will not constitute a valid kernel. To be a valid kernel, it should satisfy positive semidefiniteness, see \cite{gp_textbook_Rasmussen2003}.
A popular choice of the kernel function is the squared exponential (SE) kernel,
\begin{align}\label{eq:gaussian_kernel}
k(\x_i,\x_j) = \sigma_f^2 \exp \bigg( \hspace{-0.15cm} - \frac{ ( \x_i - \x_j )^{\top} \mathbf{L}^{-2} ( \x_i - \x_j) } {2}  \bigg) \hspace{-0.1cm} + \hspace{-0.05cm} \delta_{ij} \sigma_y^2 ,
\end{align}
where $\delta_{ij} = 1$ if $i=j$ and $0$ otherwise, $\mathbf{l} \in \Rn$ is the characteristic length scale, with $\mathbf{L} = \mathrm{diag}(\mathbf{l}) \in \R^{n \times n}$. 
The signal scale and observation noise are given by $\sigma_f^2 \in \R$ and $\sigma_y^2 \in \R$ respectively. Together, these free parameters constitute the SE kernel's hyperparameters, $\Theta = \{ \mathbf{L}, \sigma_f^2, \sigma_y^2 \}$. 


We are interested in constructing a safety function for which we assume to have noisy scalar observations.  Given a set of $N$ data points, with input vectors $\x \in \Rn$, and scalar targets $y \in \R$, we compose the dataset $\mathcal{D}_N = \{ \X_N, \y_N \}$, where $\X_N = \{\x_i\}_{i=1}^{N}$ and $\y_N = \{y_i\}_{i=1}^N$. GPs can compute the posterior mean and variance for an arbitrary deterministic query point $ \xquery \in \Rn$, by conditioning on previous measurements. We will later investigate how to handle the case when the query point $\x_q$ is noisy. The posterior mean $\mu \in \R$ and variance $\sigma^2 \in \R$ are given by \cite{gp_textbook_Rasmussen2003},
\begin{align}
\mu( \xquery) &= \mathbf{k}( \xquery)^{\top} \ \KbarInv \y_N, \label{eq:gp_mean} \\
\sigma^2( \xquery) &= k ( \xquery,  \xquery) - \mathbf{k}( \xquery)^{\top} \ \KbarInv \mathbf{k}( \xquery) \label{eq:gp_var},
\end{align}
where $\mathbf{k}(\xquery) = \big[k(\x_1,  \xquery), \ldots , k(\x_N,  \xquery) \big]^{\top} \in \mathbb{R}^{N}$ is the covariance vector between $\X_N$ and $ \xquery$, $\mathbf{\overline{K}} \in \R^{N \times N}$, with entries $[ \bar{k} ]_{(i,j)} = k(\x_i, \x_j), \ i, j \in \{1, \ldots, N\}$, is the covariance matrix between pairs of input points in $\X_N$, and $k( \xquery,  \xquery) \in \R$ is the prior covariance. The SE kernel is infinitely differentiable, and hence, it is infinitely mean-square (MS) differentiable. This allows for a flexible parameterization of the safety function and its Lie derivatives. We use the SE kernel to develop Gaussian CBFs while noting that any valid kernel can be used.

\section{PROBLEM STATEMENT}\label{sec:problem}
Consider a control affine system (\ref{eq:nonlinear_affine_system}) is given, with access to its states $\x$, and scalar noisy observations $y$, that represents a metric for safety. The metric for safety cannot be generalized and therefore is very problem dependent. A distance sensor's readings for obstacle avoidance can be used as a metric for safety or a temperature sensor's readings for determining thermally acceptable regions to traverse. In a similar vein, a LIDAR scan creating $\mathrm{3D}$ point cloud information can be used to detect environmental hazards or a computer vision algorithm providing the decision boundary for safe regions of interest. In all these examples, we can easily sample from the data based on domain knowledge to construct a target metric for safety. \textit{This provides us with the means to construct a valid safety certificate using the data as opposed to hand-designing a safety function which could be limited and requires manual effort along with good domain knowledge intuition.}

\begin{remark}
We assume there is a high-level planner or observer, e.g., sensors or computer vision algorithms, providing the necessary data observations. We acknowledge some feature engineering or data sampling may be involved which is very common in practice. These observations represent the \textit{safety sample} candidates in our problem setting.
\end{remark}


 Our objective is to \textit{synthesize a safety function $\hgp(\x)$ in a non-parametric manner from measurements of the system states and safety samples or observations online and ensure that (\ref{eq:nonlinear_affine_system}) remains safe.} Data-based methods are ultimately approximations and hence, it is desirable to account for any uncertainty in the estimation of the safety function. This leads to the following candidate function,
\begin{align}\label{eq:problem_general_gcbf}
\underbrace{{\hgp(\x(t))}}_{\text{overall safety}} := \underbrace{{\hb(\x(t)  ; \Theta  )}}_{\text{safety belief}} \ - \underbrace{\hu(\x(t) ; \Theta )}_{\text{safety uncertainty}}.
\end{align}

The system's overall safety is given by $\hgp(\x)$ which has two components; a belief in safety given by $\hb(\x)$ and an associated uncertainty given by $\hu(\x)$. Intuitively, the safety belief represents our best estimation of system safety and safety uncertainty represents the uncertainty in the estimation. Ideally, if there is no uncertainty, then the safety belief will perfectly match the overall final safety. Additionally, there are hyperparameters $\Theta$ that can alter the relative notion of safety belief and uncertainty.

\begin{problem} 
Given system \eqref{eq:nonlinear_affine_system} and online (noisy) measurements of the state $\xquery$, synthesize $\hgp(\x)$ with a safety belief and associated uncertainty, conditioned on past states and observations in the dataset given by: $\mathcal{D}_N = \{ \X_N, \y_N \}$, where $\X_N = \{\x_i\}_{i=1}^{N}$ and $\y_N = \{y_i\}_{i=1}^N$, such that system (\ref{eq:nonlinear_affine_system}) is safe.
\end{problem}

To ensure the system remains safe, we need to rectify a given nominal control input $\uu_{\mathrm{nom}}$ to its rectified form $\uu_{\mathrm{rec}}$ which is then applied to (\ref{eq:nonlinear_affine_system}). This is done by making sure the Lie derivatives of the corresponding candidate safety function satisfies the inequality (\ref{eq:cbf_inequality}).


\begin{problem}
Given system \eqref{eq:nonlinear_affine_system}, synthesized $\hgp(\x)$ with safe set $\Sc$, and a nominal control input $\uu_{\mathrm{nom}}$, design the rectified control input $\uu_{\mathrm{rect}}$ such that  system (\ref{eq:nonlinear_affine_system}) is safe.
\end{problem}

Note that designing the control objective for a data-driven based CBF construction is particularly challenging. A non-parametric approach is adopted where the data is fully exploited to construct the safe sets and safety function hypothesis. This compounds the problem of computing the Lie derivatives since time derivatives are computed on $\hgp(\x)$ in order to satisfy the forward invariance properties for CBFs. It is an ill-posed problem to compute the time derivative of an unknown entity i.e., the system's safety belief and uncertainty, without making prior assumptions. For addressing these challenges, a kernel representation is used for constructing the safety belief and uncertainty.

\section{PROPOSED METHODOLOGY}\label{sec:methodology}
In this section, we present our proposed approach, where GPs are used for synthesizing the safety function. A key advantage of GPs over other models such as neural networks, radial basis functions or polynominal chaos, lies in its bayesian non-parametric design. \textit{By allowing a flexible prior over functions, GPs give a probabilistic workflow that gives robust posterior estimates in analytical form. This enables a flexible realization for our safety function as well as computing the associated Lie derivatives.} The resulting architecture for our framework is shown in Figure \ref{fig:gcbf_architecture}. 

\subsection{Gaussian Control Barrier Function}\label{subsec:gcbf}
A GP prior is placed on the desired candidate safety function, $\hgp(\x) \sim \mathcal{GP}(0, k(\x,\x'))$. By using the GP prior, we fully specify the candidate safety function unlike our previous work in \cite{cbf_gaussian_variance_Khan2021}. Only the GP posterior variance was used in \cite{cbf_gaussian_variance_Khan2021} to handle safety uncertainty for a given deterministic CBF. Note that the new formulation is a far more flexible realization \textit{since the data is used for informing both the safety belief and associated uncertainty}. We operate under the following standard assumptions for GPs.

\begin{assumption}\label{assump:noisy_observations}
Each observation $y_i$ is corrupted with Gaussian noise, $y_i \sim \mathcal{N}(p_i, \sigma_y)$, where $p_i$ is the noise-free safety sample and $\sigma_y$ is the observation noise variance. 
\end{assumption}

\begin{assumption}\label{assump:noise_free_training_inputs}
Training input states $\X_N$ in the dataset $\mathcal{D}_N$ are noise-free. 
\end{assumption}

We consider the case of getting noisy measurements for the safety samples. This is a realistic assumption, since in practice these safety samples are captured from noisy sensory measurements. The input training data is considered to be noise-free, however, which is a common practice in machine learning. We first consider the case where the query point $\mathbf{x}_q$ is deterministic, i.e., noise-free. Later, we look at the case when the input query point is also noisy, see Section \ref{subsec:gcbf_noisy_input}.

\begin{assumption}\label{assump:initialize_safeset}
The safe set is nonempty with at least one datapoint, the initial state $\mathbf{x}(0)$ and associated safety value $\hgp(\mathbf{x}(0)) \in \R_{\geq 0}$, to synthesize $\hgp$.
\end{assumption}

We assume that the system begins in an initial compact safe set. Safety for $\hgp$ is encoded as,
\begin{align}
\Sc = \{ \x \in \Rn \ | \ \hgp(\x) \geq 0 \}, \label{eq:gcbf_safeset} \\
\partial \Sc = \{ \x \in \Rn \ | \ \hgp(\x) = 0 \}. \label{eq:gcbf_safeset_boundary}
\end{align}

A noise-free query state $\xquery \in \Rn$ with a noisy safety sample $y \in \R$ is sampled if,
\begin{align}\label{eq:sample_xq}
\norm { \x_{(i)} -  \xquery } \geq \tau, \qquad i = \{1, \cdots, N \},
\end{align}
where $\tau \in \R$ is the sampling distance between any two input states. This avoids dense sampling of the states, resulting in computational tractability. Moreover, if the samples collected are too close together, this may give rise to an ill-conditioned covariance matrix \cite{gp_textbook_Rasmussen2003}. 

\begin{definition}[Gaussian Control Barrier Function]\label{def:gcbf}
\textit{A function $\hgp(\mathbf{x}) : \Rn \rightarrow \R$ is defined as a Gaussian CBF for \eqref{eq:nonlinear_affine_system}, if $\hgp(\mathbf{x}) \sim \mathcal{GP}(0, k(\mathbf{x}_i, \mathbf{x}_j))$ is a Gaussian process, with an infinitely mean-square differentiable positive definite kernel, $k(\mathbf{x}_i, \mathbf{x}_j): \Rn \times \Rn \rightarrow \R$, and if $\exists$ an extended class-$\kappa$ function $\alpha$ such that for any $\x \in \Rn$,}
\begin{align}\label{eq:gcbf_def}
	\sup\limits_{\uu \in \Rm }
	L_f \hgp (\x  ) + 
	L_g \hgp ( \x ) \uu 
	+ \alpha( \hgp(\x )) \geq 0.
\end{align}
\end{definition}

\begin{remark} 
The Gaussian CBF above has attractive properties. A GP prior is placed on the safety candidate function, giving rise to a non-parametric functionality. Thus, the data is used to fully realize the safety function a posteriori. As more data is collected, the overall safety encoded by $\hgp(\x)$ changes. Moreover, it has an analytical form for both the safety belief and uncertainty. This enables computing Lie derivatives of $\hgp(\x)$ in closed-form. 
\end{remark}

The Lie derivatives in (\ref{eq:gcbf_def}) require taking partial derivatives of $\hgp$ with respect to $\x$ which we will discuss later in Section \ref{subsec:lie_gcbf}. We propose the Gaussian CBF $\hgp(\x)$ that incorporates safety belief and uncertainty \textit{online} using the GP posterior mean \eqref{eq:gp_mean} and variance \eqref{eq:gp_var} as follows\footnote{We can employ weights, $w_{\mu}$ and $w_{\sigma^2}$, to the posterior mean and variance respectively in order to adjust safety based on the application. For the sake of simplicity, we consider the weights to be unity in the problem statement.},
\begin{align}
\hgp (\x ) := \mu(\x ) - \sigma^2(\x ) \notag \hspace{4cm}\\
	= \underbrace{\mathbf{k}(\x )^{\top} \KbarInv \y_N}_{\text{safety belief}} 
			- \underbrace{\Big( k (\x , \x ) - \mathbf{k}(\x )^{\top} \KbarInv \mathbf{k}(\x ) \Big)}_{\text{safety uncertainty}}. \label{eq:gcbf}
\end{align}
The GP posterior mean represents the belief we have regarding safety whereas the GP posterior variance accounts for safety uncertainty. We require the following theorem to discuss forward invariance properties for the set $\Sc$ in \eqref{eq:gcbf_safeset}-\eqref{eq:gcbf_safeset_boundary}.

\begin{theorem}[Sample Path Differentiability \cite{gp_nonstationary_proofs_Paciorek2003}]\label{thm:gp_differentiability}
\textit{
A Gaussian process with an isotropic correlation function that can be expressed
in the Schoenberg representation \cite{math_metricspace_schoenberg1938}, has $M^{\text{th}}$-order mean-square partial derivatives if $M$ moments of the length-scale parameter, $l$, are finite.
}
\end{theorem}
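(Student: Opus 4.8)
The plan is to reduce sample-path differentiability to a property of the covariance kernel alone, and then to extract the moment condition from the Schoenberg integral by differentiating under the integral sign. First I would invoke the classical equivalence for second-order (hence Gaussian) processes: a zero-mean GP possesses the $M^{\text{th}}$-order mean-square partial derivative $\partial^{\alpha}\hgp$ (with $|\alpha| = M$) in the $L^2$ sense if and only if the mixed partial derivative of its covariance
\[
\frac{\partial^{2M}}{\partial \x^{\alpha}\,\partial \x'^{\alpha}}\, k(\x,\x')
\]
exists and is finite on the diagonal $\x = \x'$. This is the standard fact that differentiating a GP corresponds to differentiating the kernel once in each argument, and it reduces the entire claim to a single question: does the isotropic correlation admit $2M$ coordinate derivatives at zero lag?

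Next I would use the hypothesis that the correlation is isotropic and admits a Schoenberg representation, writing it as a scale mixture of squared-exponential kernels,
\[
k(\x,\x') = \int_0^\infty \exp\!\left(- l\,\norm{\x - \x'}^2\right) dH(l),
\]
where $l$ is the length-scale (mixing) parameter and $H$ a finite nonnegative mixing measure. Each mixture component is $C^\infty$ in both arguments, which is exactly why the pure SE kernel is infinitely mean-square differentiable; the only thing that can limit smoothness is the averaging over $l$. Differentiating the Gaussian factor in each coordinate pulls down a factor of $l$ times a polynomial in the lag $\x - \x'$, so after $2M$ derivatives and evaluation on the diagonal the surviving term is a constant multiple of $l^{M}$. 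Formally this yields
\[
\frac{\partial^{2M}}{\partial \x^{\alpha}\,\partial \x'^{\alpha}}\, k(\x,\x') \,\Big|_{\x = \x'} = c_{\alpha}\int_0^\infty l^{M}\, dH(l),
\]
so the $2M$-order diagonal derivative is finite precisely when the $M^{\text{th}}$ moment of the length-scale parameter under $H$ is finite, which is the hypothesis. Combined with the first step, finiteness of this moment delivers the $M^{\text{th}}$-order mean-square partial derivative.

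The hard part will be making the interchange of differentiation and integration rigorous: I must justify passing $2M$ partial derivatives inside the Schoenberg integral, which calls for a dominated-convergence argument. This requires exhibiting a single $H$-integrable function that dominates the differentiated integrand uniformly for $\x, \x'$ in a neighborhood of the diagonal, and the delicate regime is large $l$ (short correlation length), since that is where each successive differentiation amplifies the integrand. Constructing this dominating bound is exactly where the hypothesis is consumed --- the finiteness of the $M$ moments is what secures uniform integrability near $\x = \x'$ --- so the moment condition is not merely sufficient for the final integral to converge but is the precise enabler of the limit interchange on which the whole argument rests. A minor additional point I would check is the passage from single-coordinate to arbitrary multi-index derivatives of order $M$, handled by induction on $|\alpha|$ using the same equivalence coordinate by coordinate.
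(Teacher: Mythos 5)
The paper itself never proves this statement: Theorem~\ref{thm:gp_differentiability} is imported by citation from Paciorek (2003) and invoked as a black box in the proofs of Propositions~\ref{prop:gcbf} and~\ref{prop:gcbf_control}, so there is no in-paper proof to compare against; the relevant comparison is with the cited source. Judged on its own terms, your proposal is correct and is essentially the standard argument (and the one the source follows): (i) the Lo\`{e}ve criterion reduces $M^{\text{th}}$-order mean-square differentiability of the process to finiteness of the order-$2M$ mixed partial derivatives of the covariance on the diagonal; (ii) the Schoenberg hypothesis writes the isotropic correlation as a Gaussian scale mixture $k(\x,\x') = \int_0^\infty \exp\bigl(-l \norm{\x - \x'}^2\bigr)\, dH(l)$ with $H$ a finite measure; (iii) differentiating under the integral and evaluating at zero lag leaves $c_\alpha \int_0^\infty l^M \, dH(l)$; and (iv) the interchange is licensed because each coordinate factor obeys the Hermite-type bound $\bigl|\tfrac{d^{n}}{d\tau^{n}} e^{-l\tau^{2}}\bigr| \le C_{n}\, l^{n/2}$, so the order-$2M$ differentiated integrand is dominated by $C_\alpha l^{M}$ uniformly near the diagonal, and that dominating function is $H$-integrable exactly under the stated moment hypothesis (lower-order derivatives are covered as well, since $l^{j} \le 1 + l^{M}$ for $j \le M$ and $H$ is finite). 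Your observation that the moment condition is what enables the limit interchange, rather than merely making the final integral converge, is the right way to organize the proof.

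Two points deserve tightening. First, your mixing variable multiplies the squared lag, so it is an inverse squared length-scale; the theorem's phrase ``$M$ moments of the length-scale parameter $l$'' must be read as moments of the Schoenberg mixing variable, which is the reading your computation requires --- the looseness is in the paper's paraphrase of Paciorek, not in your argument. Second, the theorem's title promises \emph{sample-path} differentiability while its body (and your proof) deliver \emph{mean-square} differentiability; these are not equivalent, and almost-sure differentiability of realizations requires an additional continuity argument on the derivative processes, which the cited source treats separately. Since the body of the statement claims only mean-square derivatives, you prove exactly what is stated, but you should flag that the title (and, downstream, the paper's use of Theorem~\ref{thm:gp_differentiability} to conclude $C^1$ smoothness for Nagumo's theorem in Proposition~\ref{prop:gcbf}) presumes more than mean-square differentiability alone provides.
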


We first consider an unforced dynamical system given by $\dot{\mathbf{x}} = f(\mathbf{x})$, where $\uu(t) = 0, \forall t \geq 0$. In this case, the Gaussian CBF will simply be considered as a Gaussian barrier function, since the control input does not appear.

\begin{proposition}\label{prop:gcbf}
\textit{Given a system $\dot{\x} = f(\x)$ with a nonempty safe set $\Sc$ as defined by (\ref{eq:gcbf_safeset}-\ref{eq:gcbf_safeset_boundary}) for a Gaussian process $\hgp$, if $\hgp$ is a Gaussian barrier function defined on the set $\Sc$, then $\Sc$ is forward invariant. 
}
\end{proposition}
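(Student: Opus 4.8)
The plan is to reduce the claim to the classical comparison-lemma (Nagumo-type) argument used for ordinary barrier functions, with the only genuinely new ingredient being the smoothness of the Gaussian process along trajectories, which Theorem~\ref{thm:gp_differentiability} provides. First I would pin down what ``Gaussian barrier function'' means in the unforced setting: setting $\uu = 0$ in the Gaussian CBF inequality \eqref{eq:gcbf_def} leaves the scalar condition $L_f \hgp(\x) + \alpha(\hgp(\x)) \geq 0$ for every $\x \in \Rn$, equivalently $L_f \hgp(\x) \geq -\alpha(\hgp(\x))$, where $\alpha$ is an extended class-$\kappa$ function with $\alpha(0)=0$ and strictly increasing. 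This inequality is the working hypothesis.

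Next I would establish that $\hgp$ is differentiable enough for $L_f \hgp$ and the chain rule to make sense. Because $\hgp = \mu - \sigma^2$ is assembled from the squared-exponential kernel, which is infinitely mean-square differentiable, Theorem~\ref{thm:gp_differentiability} guarantees $\hgp$ admits the needed partial derivatives, so $\hgp \in C^1(\Rn)$. Consequently, along any solution $\x(t)$ of $\dot{\x}=f(\x)$, the composite $\eta(t) := \hgp(\x(t))$ is differentiable and $\dot{\eta}(t) = L_f \hgp(\x(t))$ by the chain rule, which turns the barrier hypothesis into the scalar differential inequality $\dot{\eta}(t) \geq -\alpha(\eta(t))$.

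Then I would run the comparison argument. Consider the auxiliary scalar problem $\dot{y} = -\alpha(y)$ with $y(0) = \eta(0) = \hgp(\x(0)) \geq 0$, where nonnegativity holds by Assumption~\ref{assump:initialize_safeset}. Since $\alpha(0)=0$, the point $y=0$ is an equilibrium of this ODE, so its solution cannot cross zero from above and therefore satisfies $y(t) \geq 0$ for all $t \geq 0$. Applying the comparison lemma to $\dot{\eta} \geq -\alpha(\eta)$ versus $\dot{y} = -\alpha(y)$ with matched initial data yields $\eta(t) \geq y(t) \geq 0$, hence $\hgp(\x(t)) \geq 0$ and $\x(t)\in\Sc$ for all $t\geq 0$, which is precisely forward invariance of $\Sc$.

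The main obstacle I anticipate is making the differentiability step fully rigorous: the forward-invariance conclusion is a deterministic statement about trajectories, yet $\hgp$ originates as a realization of a stochastic process, so one must argue that the realized posterior function genuinely inherits the smoothness promised by the mean-square differentiability result, guaranteeing that $L_f\hgp$ is well defined pointwise and that the chain rule applies along $\x(t)$. Once that smoothness is secured, the comparison-lemma core is routine, and the roles of $\alpha(0)=0$ and of the initial safe datapoint (Assumption~\ref{assump:initialize_safeset}) are simply to keep the comparison trajectory confined to the nonnegative orthant.
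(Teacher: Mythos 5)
Your proof is correct, but it takes a genuinely different route from the paper's. The paper argues via Nagumo's theorem: it uses the barrier inequality only on the boundary $\partial\Sc$, where $\hgp(\x)=0$ forces $\alpha(\hgp(\x))=0$ and hence $\dot{h}_{\mathrm{gp}}\geq 0$, and then cites the condition $\dot{h}_{\mathrm{gp}}\geq 0$ on $\partial\Sc$ as necessary and sufficient for forward invariance. You instead run a comparison-lemma argument along whole trajectories, bounding $\eta(t)=\hgp(\x(t))$ below by the solution of $\dot{y}=-\alpha(y)$ with matched initial data; this mirrors the proof of the classical CBF safety theorem (Theorem~\ref{thm:cbf}, Ames et al.) rather than the paper's boundary argument. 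What each buys: your route needs the inequality $L_f \hgp(\x)\geq-\alpha(\hgp(\x))$ along the entire trajectory, not just on $\partial\Sc$ --- Definition~\ref{def:gcbf} grants it for all $\x\in\Rn$, so the hypothesis is available --- and in exchange yields the quantitative bound $\hgp(\x(t))\geq y(t)\geq 0$ and avoids any boundary-regularity issue; the paper's Nagumo route uses a weaker pointwise hypothesis (the inequality only on the boundary), but its sufficiency implicitly requires a constraint qualification such as $\nabla \hgp(\x)\neq 0$ on $\partial\Sc$, which the paper never verifies, so your argument is in this respect the more airtight of the two. Two small points on your write-up: the comparison lemma for an $\alpha$ that is merely continuous and strictly increasing (not necessarily locally Lipschitz) should be phrased via maximal solutions, though your separate ``cannot cross zero'' observation already patches the only place this matters; and the sample-path-versus-mean-square differentiability obstacle you flag at the end is actually vacuous here, because $\hgp=\mu-\sigma^2$ is the deterministic posterior mean minus posterior variance --- a closed-form combination of SE kernels and fixed matrices, not a random draw from the process --- so it is smooth as an ordinary function, which is all the chain rule needs; both you and the paper invoke Theorem~\ref{thm:gp_differentiability} for this, but neither actually needs the stochastic statement.
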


\begin{proof}
First, we observe that $\hgp$ uses an infinitely MS differentiable kernel. Hence, $\hgp$ is also infinitely MS differentiable with respect to $\x$ due to Theorem \ref{thm:gp_differentiability} since the length-scale has infinitely many moments. Since $\hgp$ is a Gaussian CBF and infinitely MS differentiable, then the inequality $L_f \hgp (\x) \geq -\alpha( \hgp(\x))$, is satisfied. Given Assumption \eqref{assump:initialize_safeset}, the set $\Sc$ is nonempty, for any $\x \in \partial \Sc$, $\hgp(\x) = 0$ holds. As a result, $\alpha(\hgp(\x)) = 0$ which gives $L_f \hgp(\x) \geq 0 \implies \dot{h}_{\mathrm{gp}} \geq 0$. By applying Nagumo's theorem \cite{controls_sets_Blanchini2008}, which states that for any $C^1$ function $\hgp$, the condition $\dot{h}_{\mathrm{gp}} \geq 0$ on $\partial \Sc$ is necessary and sufficient for  the set $\Sc$ to be forward invariant, completes the proof.
\end{proof}

\begin{remark}
For the case when the kernel is only $M$ times differentiable, and not infinitely MS differentiable, we require that $M > 2\rho$, where $\rho \in \mathbb{N}$ is the relative degree of the system. The proof above holds trivially for an $M$ times differentiable kernel using Theorem \ref{thm:gp_differentiability}.
\end{remark}

We are interested in ensuring forward invariance of $\Sc$ characterized by $\hgp$ for the system defined by ($\ref{eq:nonlinear_affine_system}$). The admissible control space for the Gaussian CBF is given by,
\begin{align}\label{eq:gcbf_inequality}
\mathrm{K}_{\mathrm{gcbf}} \hspace{-0.05cm} = \hspace{-0.1cm} \{ \uu \in \Rm \big| 
L_f \hgp(\x) \hspace{-0.07cm} +  \hspace{-0.07cm}
L_g \hgp(\x) \uu + \hspace{-0.06cm} 
\alpha( \hgp(\x)) \hspace{-0.06cm} \geq \hspace{-0.06cm} 0 \}.
\end{align}

\begin{proposition}\label{prop:gcbf_control}
\textit{Given a Gaussian CBF $\hgp(\x) : \Sc \rightarrow \R$ defined by (\ref{eq:gcbf_def}), where $\Sc$ is nonempty (\ref{eq:gcbf_safeset}), any Lipschitz continuous controller $\uu \in \Rm$, that satisfies (\ref{eq:gcbf_inequality}) for any $\x \in \Rn$, renders $\Sc$ forward invariant for the system (\ref{eq:nonlinear_affine_system}).
}
\end{proposition}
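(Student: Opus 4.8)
The plan is to extend the argument of Proposition~\ref{prop:gcbf} from the unforced system to the controlled dynamics \eqref{eq:nonlinear_affine_system}, thereby reducing the claim to an application of Nagumo's theorem as in the proof of Proposition~\ref{prop:gcbf}. First I would dispatch well-posedness of the closed loop: since $f$ and $g$ are locally Lipschitz and the controller $\uu$ is assumed Lipschitz continuous, the closed-loop vector field $f(\x) + g(\x)\uu$ is locally Lipschitz, which guarantees local existence and uniqueness of trajectories so that forward invariance is a meaningful property to assert. I would then recall from Proposition~\ref{prop:gcbf} that, by Theorem~\ref{thm:gp_differentiability}, the infinitely MS differentiable kernel renders $\hgp$ at least $C^1$ in $\x$, which is exactly the regularity Nagumo's theorem requires.

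The core computation is to differentiate $\hgp$ along closed-loop trajectories. By the chain rule, $\dot{h}_{\mathrm{gp}} = \dhgpdx \dot{\x} = \dhgpdx \big( f(\x) + g(\x)\uu \big) = L_f\hgp(\x) + L_g\hgp(\x)\uu$, so the control enters precisely through the $L_g\hgp(\x)\uu$ term that was absent in Proposition~\ref{prop:gcbf}. Because $\uu$ is drawn from $\mathrm{K}_{\mathrm{gcbf}}$ in \eqref{eq:gcbf_inequality}, the defining inequality \eqref{eq:gcbf_def} of the Gaussian CBF yields $\dot{h}_{\mathrm{gp}} \geq -\alpha(\hgp(\x))$ for every $\x \in \Rn$.

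Restricting this inequality to the boundary $\partial\Sc$, where $\hgp(\x) = 0$, and using that $\alpha$ is extended class-$\kappa$ with $\alpha(0) = 0$, I obtain $\dot{h}_{\mathrm{gp}} \geq 0$ on $\partial\Sc$. Nagumo's theorem \cite{controls_sets_Blanchini2008} then asserts that, for the $C^1$ function $\hgp$, the sign condition $\dot{h}_{\mathrm{gp}} \geq 0$ on $\partial\Sc$ is necessary and sufficient for $\Sc$ to be forward invariant under the closed-loop flow, which completes the argument; the nonemptiness of $\Sc$ guaranteed by Assumption~\ref{assump:initialize_safeset} ensures the statement is not vacuous.

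I do not expect a deep obstacle: once the smoothness of $\hgp$ is inherited from Theorem~\ref{thm:gp_differentiability}, the result is the standard CBF safety guarantee (cf.\ Theorem~\ref{thm:cbf}) specialized to the posterior construction \eqref{eq:gcbf}. The points demanding care are largely bookkeeping, namely confirming that the closed-loop vector field is Lipschitz so that solutions exist and are unique, and checking that evaluating $\alpha$ at the boundary value $\hgp = 0$ genuinely annihilates the class-$\kappa$ term; a further mild regularity caveat is that the boundary form of Nagumo's theorem is cleanest when $0$ is a regular value of $\hgp$ on $\partial\Sc$. Should one want a guarantee from any interior initial condition rather than only a boundary sign test, I would instead apply the comparison lemma to $\dot{h}_{\mathrm{gp}} \geq -\alpha(\hgp(\x))$, whose scalar comparison system $\dot{y} = -\alpha(y)$ with $y(0) \geq 0$ keeps $y(t) \geq 0$ for all $t$; but the Nagumo route keeps this proof parallel to that of Proposition~\ref{prop:gcbf}.
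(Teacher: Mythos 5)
Your proof is correct, but it takes a different route from the paper's. The paper's own proof is a two-step reduction: it establishes smoothness of $\hgp$ via Theorem~\ref{thm:gp_differentiability} (exactly as you do), observes that a controller satisfying \eqref{eq:gcbf_inequality} gives $L_f \hgp(\x) + L_g \hgp(\x)\uu \geq -\alpha(\hgp(\x))$, and then simply invokes Theorem~\ref{thm:cbf} --- the cited safety condition of Ames et al.\ --- to conclude forward invariance in one line. You instead bypass Theorem~\ref{thm:cbf} entirely and extend the Nagumo argument from Proposition~\ref{prop:gcbf} to the closed-loop system: well-posedness of $\dot\x = f(\x)+g(\x)\uu$ under the Lipschitz hypotheses, the chain-rule computation of $\dot h_{\mathrm{gp}}$, restriction of the inequality to $\partial\Sc$ where $\alpha(\hgp)=0$, and Nagumo's theorem on the boundary. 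What the paper's route buys is brevity and delegation of the technical content (including behavior from interior initial conditions) to the established CBF theorem; what your route buys is a self-contained argument structurally parallel to Proposition~\ref{prop:gcbf}, plus two points of care the paper leaves implicit: the local Lipschitz continuity of the closed-loop vector field needed for solutions to exist and be unique, and the caveat that the boundary form of Nagumo's theorem is cleanest when $0$ is a regular value of $\hgp$ (a caveat that equally applies to the paper's proof of Proposition~\ref{prop:gcbf}). Your closing remark about the comparison lemma applied to $\dot h_{\mathrm{gp}} \geq -\alpha(\hgp)$ is in fact essentially how Theorem~\ref{thm:cbf} is proved in the cited literature, so the two routes reconnect there.
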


\begin{proof}
$\hgp$ is a Gaussian process with an infinitely MS differentiable kernel. Using Theorem \ref{thm:gp_differentiability}, $\hgp$ is also infinitely MS differentiable and is, therefore, smooth. Since $\hgp$ satisfies (\ref{eq:gcbf_inequality}), we have $L_f \hgp(\x) + L_g \hgp(\x) \uu \geq -\alpha(\hgp(\x))$. Using Theorem \ref{thm:cbf}, the proof is complete.
\end{proof}

\begin{figure}[!t]
\centering
\vspace{0.2cm}
\includegraphics[width=1\linewidth]{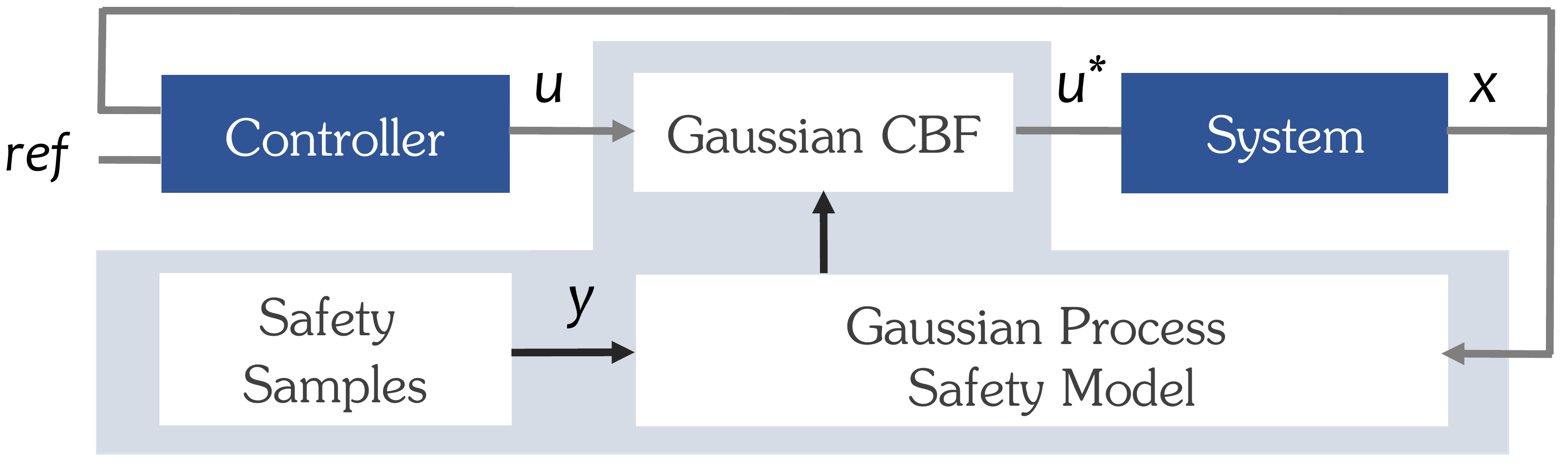}
\caption{Gaussian CBF incorporates safety belief and uncertainty based on past measurement data.}
\vspace{-0.2cm}
\label{fig:gcbf_architecture}
\end{figure}

\subsection{Lie Derivatives of Gaussian CBF}\label{subsec:lie_gcbf}
The Gaussian CBF uses kernels for determining safety belief and uncertainty in the state space. As stated earlier, we use the SE kernel (\ref{eq:gaussian_kernel}) which is an infinitely mean-square differentiable function. Computing its Lie derivatives is necessary for rectifying the control input and ensuring forward invariance for the system in the safe set. First, we take the partial derivative of (\ref{eq:gcbf}) with respect to $\x$ at a query point $\xquery$,
\begin{align}
&\dhgpdx \pipex \hspace{-.3cm} = \dmudx \pipex - \dvardx \pipex \notag \\
& \hspace{1cm} = \y_N^{\top} \KbarInv \dkdx \pipex \hspace{-0.3cm} 
		+ 2 \mathbf{k}( \xquery)^{\top} \KbarInv \dkdx \pipex \hspace{-0.3cm} . \hspace{-0.1cm} \label{eq:dhgpdx}
\end{align}
The kernel derivative in (\ref{eq:dhgpdx}) is given by,
\begin{align}\label{eq:kernel_deriv}
\dkidx \pipex &= (\x_{(i)} -  \xquery)^{\top} \ k(\x_{(i)} ,  \xquery) \mathbf{L}^{-2} ,
\end{align}
where $\mathbf{k}_{(i)}$ is the $i^{\mathrm{th}}$ element of $\mathbf{k}(\x)$, and (\ref{eq:kernel_deriv}) is the $i^{\text{th}}$ row of $\dkdx \in \mathbb{R}^{N \times n}$. Now, we can compute the Lie derivatives of $\hgp(\x)$ by taking its time derivative as follows,
\begin{align}
\dot{h}_{\mathrm{gp}}(\x) &= \dhgpdx f(\x) + \dhgpdx g(\x) \uu		\notag 	\\
	&= L_f \hgp (\x) + L_g \hgp (\x) \uu, \label{eq:gcbf_lie}
\end{align}
\noindent where (\ref{eq:dhgpdx}) is used in the Lie derivatives, $L_f \hgp (\x)$ and $L_g \hgp (\x)$. In prior literature, the derivative predictions of GP posterior mean and variance are exploited \cite{gp_derivatives_observations_Solak2003, gp_derivatives_scaling_Eriksson2018}. 
Here, we take the partial derivatives of $\mu$ and $\sigma^2(\x)$ with respect to the state $\x$ which are different from the derivative predictions of GP posterior mean and variance.

\subsection{Online Safety Control}\label{subsec:online_qp}
Consider a nominal control input $\uu_{\nom} \in \Rm$ that is designed as the feedback policy for system \eqref{eq:nonlinear_affine_system}. This control policy may not restrict the solution of system \eqref{eq:nonlinear_affine_system} inside the safe set. An online quadratic program (QP) rectifies $\uu_{\nom}$ whose constraints are given by the Lie derivatives in (\ref{eq:gcbf_lie}) \cite{cbf_Ames2017}. The QP optimization routine is set up as follows:

\begin{algorithm}
  Gaussian CBF-QP: \textit{Input modification}
	\begin{align}\label{eq:gcbf-qp_degree1}
		 \uu_\mathrm{rect} &= \argmin_{ \uu \in \Rm} \frac{1}{2} \norm{ \uu - \uu_{\nom} }^2 \ \ \text{s.t.} \\
			 & \ \  L_f \hgp (\x) + L_g \hgp (\x) \uu + \alpha( \hgp (\x)) \geq 0, \notag
	\end{align}
\end{algorithm}

\noindent where $\urect$ is the rectified control input. The QP constraint above ensures that the nominal control is followed as long as the safety condition is not violated, i.e., $\hgp(\mathbf{x}) \geq 0$. When approaching the boundary of the safe set, i.e., $\hgp \rightarrow 0$, the QP rectifies $\unom$ minimally to $\urect$. By rectifying the control policy, the system is guaranteed to remain forward invariant for the safe set $\Sc$ due to Proposition \ref{prop:gcbf_control}. When solving for the QP, every term in the constraint is simply a numerical value except for the decision variable, the control input, which is rectified. Therefore, $\hgp(\mathbf{x})$ being highly non-linear and non-convex does not affect finding the rectified control input. The algorithm for computing safe control input from the Gaussian CBF is shown in Algorithm \ref{algo:gcbf_algorithm}.

\begin{algorithm}
\caption{Gaussian CBF Synthesis \& Safe Control}\label{algo:gcbf_algorithm}

\hspace*{\algorithmicindent} \textbf{Input:} \textsc{GP Prior} $\hgp(\x) \sim \mathcal{GP}(0, k(\x,\x'))$ \\
\hspace*{\algorithmicindent} \hspace{1cm} \textsc{System} (\ref{eq:nonlinear_affine_system}) \\
\hspace*{\algorithmicindent} \hspace{1cm} \textsc{Nominal Input} $\uu_{\nom}$

\hspace*{\algorithmicindent} \textbf{Output:} \textsc{Rectified input} $\uu_{\mathrm{rect}}$

\begin{algorithmic}[1]
\Procedure{SafeControl}{}
      	\State \textsc{Sample} $\mathbf{X}_N \gets \xquery \ \& \ \mathbf{y}_N \gets y$ using (\ref{eq:sample_xq})
      	\State \textsc{Synthesize} $\hgp(\X_N, \y_N)$ using (\ref{eq:gcbf})
      	\State \textsc{Compute} $\dhgpdx$ using (\ref{eq:dhgpdx}-\ref{eq:kernel_deriv})
      	\State \textsc{Setup QP} constraint using (\ref{eq:gcbf_lie})
      	\State \textsc{Rectify} $\uu_{\nom}$ using (\ref{eq:gcbf-qp_degree1}) \\
   		\textbf{return} $\uu_\mathrm{rect}$
\EndProcedure
\end{algorithmic}
\end{algorithm}
\vspace{-.25cm}

\begin{remark}
Note that due to the non-parametric nature of the Gaussian CBF, the algorithm above can be treated as a blackbox routine. This is a beneficial property since, if a traditional CBF is altered, then the corresponding Lie derivatives also change explicitly in their form. However, in the Gaussian CBF, the structure of the Lie derivatives remains the same, i.e, the partial derivatives are explicitly agnostic to the underlying CBF. It is characterized only by the data and the dynamical system.
\end{remark}

\subsection{Gaussian CBF with Noisy Query State}\label{subsec:gcbf_noisy_input}
We extend Gaussian CBFs to handle the case when the query state, $\xnoisy$, is stochastic and therefore a random variable.

\begin{assumption}\label{assump:noisy_test_inputs}
The query state $\x_q$ is Gaussian distributed, $\underline{\x}_q \sim \mathcal{N}(\bm{\mu}, \bm{\Sigma})$, where $\bm{\mu}$ is the mean and $\bm{\Sigma}$ is its noise covariance matrix.
\end{assumption}

This has practical significance because accurate estimates of these states are required to generate safe control actions. In practice, however, measurement uncertainty is pervasive leading to error in the state estimates, thus degrading the safety behavior. As a result, we need to modify the posterior predictions of the GP in order to account for this noise.

The predictive equations for a Gaussian test input have been looked at before \cite{gp_multistep_girard2002, gp_deisenroth2011}. Generally, if a Gaussian input is multiplied with the nonlinear GP predictive distribution, the resulting distribution is non-Gaussian,
\begin{align}\label{eq:noisy_input_nongp}
\hspace{-0.15cm} p(\hgp(\xnoisy) | \ \bm{\mu}, \bm{\Sigma}) = 
\int p( \hgp(\xnoisy) | \xnoisy ) p( \xnoisy | \bm{\mu}, \bm{\Sigma} ) \mathrm{d} \xnoisy.
\end{align}

As a result, moment matching is used to derive the posterior predictions. 
To determine the moments of the predictive function value, both the query distribution and the distribution of the function given by the GP are averaged over. For the SE kernel, the posterior mean and variance can be computed for the predictive distribution in \eqref{eq:noisy_input_nongp} in closed-form\footnote{This statement holds true for all kernels, in particular the SE, polynomial, and trigonometric kernels, if the integral of the kernel multiplied with a Gaussian distribution can be solved analytically.} \cite{gp_thesis_deisenroth2010}. By using the law of iterated expectations, the posterior mean with a noisy query point $\xnoisy$ is given as follows \cite{gp_thesis_deisenroth2010},
\begin{align}\label{eq:noisy_gp_mean}
\mu(\xnoisy) &= \mathbf{q}(\xnoisy)^{\top} \ \KbarInv \y_N,
\end{align}
where $\mathbf{q} = [q_i, \dots, q_N ]^{\top} \in \R^N$ with each $q_i$ representing the expected covariance between $\hgp(\xnoisy)$ and $\hgp(\x_i)$,
\begin{align*}
q_i(\x_i, \xnoisy) &:= \int k(\x_i, \xnoisy) \mathcal{N} (\xnoisy | \bm{\mu}, \bm{\Sigma} ) \mathrm{d} \xnoisy \\
&\hspace{-1.2cm}= \sigma_f^2 | \ \bm{\Sigma} \mathbf{L}^{-2} + \mathbf{I}_n  \ |^{\frac{1}{2}} \exp \bigg( \hspace{-0.15cm} - \frac{1}{2} ( \x_i - \bm{\mu} )^{\top} \mathbf{R}^{-1} ( \x_i - \bm{\mu})  \bigg),
\end{align*}
where $\mathbf{R} = \bm{\Sigma} + \mathbf{L}^{2} \in \R^{n \times n}$. It is interesting to note the case for a deterministic query point $\xquery$, where $\bm{\Sigma = 0}$ and $\bm{\mu} = \xquery$. On comparing \eqref{eq:noisy_gp_mean} with \eqref{eq:gp_mean}, the posterior mean for the noisy input results in the same posterior mean for the noise-free input, since $q_i(\x_i, \xquery)$ collapses to $k_i(\x_i, \xquery)$ in \eqref{eq:gaussian_kernel}. Effectively, the noise-free input point is a special case of the noisy posterior prediction with $\mathbf{\Sigma} = \bm{0}$ and $\bm{\mu} = \xquery$.

For details on the derivation of the preditive variance for the noisy test point, see \cite{gp_thesis_deisenroth2010}. Here, we simply state the posterior predictive variance which is as follows,
\begin{align}\label{eq:noisy_gp_var}
\hspace{-0.2cm} \sigma^2(\xnoisy) = \sigma_f^2 - \mathrm{tr}\big( \KbarInv \mathbf{V} \big) + \bm{\beta}^{\top} \big( \mathbf{V}\bm{\beta} - \mathbf{q} \big),
\end{align}
with $\bm{\beta} = \KbarInv \y_N \in \R^N$, and the entries of $\mathbf{V} \in \R^{N \times N}$ are given by,
\begin{align*}
v_{ij} = \frac{k(\x_i, \bm{\mu}) k(\x_j, \bm{\mu})}{ | 2\bm{\Sigma}\mathbf{L}^{-2} + \mathbf{I}_n |^{\frac{1}{2}} } \exp \big( (\mathbf{z}_{ij} - \bm{\mu})^{\top} \mathbf{T} (\mathbf{z}_{ij} - \bm{\mu}) \big),
\end{align*}
where $\mathbf{T} := (\bm{\Sigma} + \frac{1}{2}\mathbf{L}^2)^{-1} \bm{\Sigma} \mathbf{L}^{-2} \in \R^{n \times n}$ and each entry in $\mathbf{z}_{ij} := \frac{1}{2}(\x_i + \x_j) \in \Rn$. As seen in \eqref{eq:noisy_gp_mean} and \eqref{eq:noisy_gp_var}, both the predictive mean and variance explicitly depend on the mean $\bm{\mu}$ and the covariance matrix $\bm{\Sigma}$ of the Gaussian distributed query state $\xnoisy$.

The Gaussian CBF for a noisy query point $\xnoisy$ is given by,
\begin{align}\label{eq:gcbf_noisy}
\hgp(\xnoisy) := \mu(\xnoisy) - \sigma^2(\xnoisy).
\end{align}
To compute the Lie derivatives, the partial derivative of \eqref{eq:gcbf_noisy} with respect to $\bm{\mu}$ is given by,
\begin{align}
\frac{\partial \mu(\xnoisy)}{\partial \xnoisy} \pipemu &= \y_N^{\top} \KbarInv \dqdx \pipemu \label{eq:gcbf_noisy_mean_derivative} \\
\Bigg[ \frac{\partial \sigma^2(\xnoisy)}{\partial \xnoisy} \pipemuk \Bigg]
	&= 	\hspace{-0.2cm} \sum_{i,j = 1}^{N} \hspace{-0.1cm}
	 \Big(	\frac{-1}{\bar{k}_{ij}} + 	\beta_i \beta_j \Big) \frac{\partial v_{ij}}{\partial x_k}  \pipemuk 
	 \hspace{-0.4cm} - \sum_{i = 1}^N \beta_i \frac{\partial q_i}{\partial x_k}  \pipemuk \hspace{-0.1cm}, \label{eq:gcbf_noisy_variance_derivative}
\end{align}
where $k = \{ 1, \dots, n \}$, $x_k$ and $\mu_k$ are the $k^{\mathrm{th}}$ entries of $\xnoisy$ and $\bm{\mu}$ respectively, and $\bar{k}_{ij} = k(\x_i, \x_j) \in \R$. We emphasize that $\mathbf{V}$ is a symmetric matrix which allows combining the summation with the choice of indices in \eqref{eq:gcbf_noisy_variance_derivative}. The Lie derivatives can then be computed using the equations above, similar to \eqref{eq:gcbf_lie}, to achieve safe constrained control by setting up a QP as constructed in \eqref{eq:gcbf-qp}.

\section{APPLICATION TEST CASE: QUADROTOR}\label{sec:application}
To demonstrate the efficacy of our method, we implement our proposed technique on a quadrotor system. Quadrotors pose an interesting and challenging problem due to their inherently unstable nature. 
We run $3$ separate experiments using Gaussian CBF on a quadrotor: (a) safe constrained control for arbitrary safe sets (b) explore the state space safely and synthesize the safe set online, and (c) safe constrained control in presence of noisy states and compare the performance with regular CBFs. 
First, we review quadrotor dynamics followed by safety rectification using the Gaussian CBF for a quadrotor platform.

\subsection{Quadrotor Dynamics}\label{subsec:quadrotor_dynamics}
We consider the position dynamics and attitude dynamics of a quadrotor model evolving in a coordinate-free framework. This framework uses a geometric representation for its attitude given by a rotation matrix $\mathbf{R}$ on $SO(3) := \{ \mathbf{R} \in \R^{3 \times 3} \ | \ \mathbf{R}^{\top}\mathbf{R} = \mathbf{I}_3, \ \det(\mathbf{R}) = 1 \}$. $\mathbf{R}$ represents the rotation from the body-frame to the inertial-frame. The origin of the body-frame is given by the quadrotor's center of mass, denoted by $\mathbf{r} \in \R^3$. 
A quadrotor is an underactuated system since it has $6$ DOF, due to its configuration space being $SE(3) := \R^3 \times SO(3)$, but $4$ control inputs; thrust $F \in \R$ and moments $\mathbf{M} \in \R^3$. The equations of motion are:
\begin{align}
\dot{\mathbf{r}} 	&= \mathbf{v} , 	\notag \\
m\dot{\mathbf{v}}	&= -m g \mathbf{e}_3 + F \mathbf{R} \mathbf{e}_3 ,	\label{eq:dyn2}\\
\dot{\mathbf{R}}	&= \mathbf{R} \mathbf{\Omega}^{\times},	\notag \\
\mathbf{J} \dot{\mathbf{\Omega}}	&= \mathbf{M} - (\mathbf{\Omega}^{\times} \mathbf{J} \mathbf{\Omega})	, \label{eq:dyn4}
\end{align}
where $\mathbf{v} \in \R^3$ is the velocity in the inertial frame, $m \in \R$ is the quadrotor mass, $g \in \R$ is gravity, $\mathbf{e}_3 = [0 \ 0 \ 1]^{\top} \in \R^3$, $\mathbf{\Omega} \in \R^3$ is the body-frame angular velocity, $\mathbf{J} \in \R^{3 \times 3}$ is the inertia matrix, and $(\cdot)^{\times}: \R^3 \rightarrow so(3)$ is the skew-symmetric operator, such that $\forall \ \mathbf{x}, \mathbf{y} \in \R^3, \ \mathbf{x}^{\times}\mathbf{y} = \mathbf{x} \times \mathbf{y}$.

\subsection{Setpoint Generation for Quadrotor}\label{subsec:setpoint_quadrotor}
For achieving safety constrained control of the quadrotor, we first compute setpoints. These setpoints are sent to a Crazyflie 2.1 \cite{crazyflie}, in the form of desired thrust, $F_{\des}$, and desired roll, pitch, yaw angles, $\eta = [\phi_\des, \ \theta_\des, \ \psi_\des]^{\top} \in \R^3$. The Crazyflie is equipped with a fast response low-level onboard controller that can directly track these setpoint commands. More details regarding the hardware experimental setup are covered in Section \ref{subsec:experiment_setup}.

Given a desired trajectory, $\mathbf{r}_\des \in \R^3$, that is twice differentiable, a second-order integrator model is set up,
\begin{align}\label{eq:second_order_integrator}
\underbrace{
\begin{bmatrix}
\dot{\mathbf{r}} \\ \ddot{\mathbf{r}} 
\end{bmatrix}
}_{\dot{\x}}
&=
\underbrace{
\begin{bmatrix}
\mathbf{0} & \mathbf{I} \\
\mathbf{0} & \mathbf{0}
\end{bmatrix}
\begin{bmatrix}
\mathbf{r}  \\ \dot{\mathbf{r}} 
\end{bmatrix}
}_{f(\x)}
+
\underbrace{
\begin{bmatrix}
\mathbf{0} \\
\mathbf{I}
\end{bmatrix}
}_{g(\x)}
\uu,
\end{align}
where $\x = [\mathbf{r}  \ \mathbf{\dot{r}} ]^{\top} \in \R^6$ and $\uu = \ddot{\mathbf{r}}_\des \in \R^3$. The input $\uu$ in (\ref{eq:second_order_integrator}) is rectified using the synthesized Gaussian CBF generating the following rectified setpoints,
\begin{align}
\phi_\rect &= \frac{ (u_{1, \rect} \sin \psi - u_{2, \rect} \cos \psi) } {g}, \label{eq:rectified_roll}\\
\theta_\rect &= \frac{ (u_{1, \rect} \cos\psi + u_{2, \rect} \sin \psi)} {g}, \label{eq:rectified_pitch}\\
F_\rect &=  m(u_{3, \rect} + g), \label{eq:rectified_thrust}
\end{align}
where the desired yaw is assumed to be zero and small angle approximations are made to invert the dynamics in (\ref{eq:dyn2}) for simplicity \cite{cbf_teleoperation_Xu2018, quadrotor_trajectory_generation_control_Mellinger2012}. Next, we discuss the safety rectification of $\uu$ to compute $\uu_\rect$ using the Gaussian CBF.

\subsection{Online Control Rectification}\label{subsec:quadrotor_safety_rectification}
Given a Gaussian CBF expressed in the position space, the relative degree for system (\ref{eq:second_order_integrator}) is $\rho = 2$. The associated Lie derivatives for the Gaussian CBF in (\ref{eq:gcbf}) are,
\begin{align*}
L_f \hgp (\x) 
	&= 	\Big( \nabla \mu(\x) - \nabla \sigma^2(\x) \Big) ^{\top} f(\x),  \hspace{3cm} \\[3pt]
L_f^2\hgp(\x) 
	&= 	f(\x)^{\top} \Big( \mathbf{H}_{\mu} (\x) - \mathbf{H}_{\sigma^2}  \Big) f(\x) \\[-2pt]
  	& \hspace{0.25cm} +	\Big( \nabla \mu(\x) - \nabla \sigma^2(\x) \Big) ^{\top} \cdot \nabla f(\x) \cdot f(\x),	\\[5pt]
L_gL_f\hgp(\x) 
	&= f(\x)^{\top} \Big( \mathbf{H}_{\mu} (\x) - \mathbf{H}_{\sigma^2} (\x) \Big) g(\x) \\
	& \hspace{0.25cm} + \Big( \nabla \mu(\x) - \nabla \sigma^2(\x) \Big)^{\top} \cdot \nabla f (\x) \cdot  g(\x),
\end{align*}
where $\nabla \mu(\x) = \dmudx^{\top}$ and $\nabla \sigma^2(\x) = \dvardx^{\top}$ are the gradients of GP mean and variance in (\ref{eq:dhgpdx}) and $\nabla f(\x) = \frac{\partial f (\x) }{\partial \x}$ is the Jacobian of $f(\x)$. $\mathbf{H}_{\mu}(\x)$ and $\mathbf{H}_{\sigma^2}(\x)$ are the Hessians of GP mean and variance given by,
\begin{align*}
\mathbf{H}_{\mu}(\x) &= \bigg( \hspace{-0.1cm} \sum_i^N a_i \ddkiddx  \bigg), \\[-8pt]
\mathbf{H}_{\sigma^2}(\x) &= 
- 2 \nabla \mathbf{k}(\x) \mathbf{ \overline{K} \hspace{0.1cm} }^{-1} \nabla \mathbf{k}(\x)^{\hspace{-0.05cm}\top} \hspace{-0.1cm} 
- 2 \bigg( \hspace{-0.1cm} \sum_i^N b_i \ddkiddx  \bigg),
\end{align*}
where $a_i$ is the $i^{\mathrm{th}}$ entry of $\y_N^{\top} \KbarInv \in \R^{1 \times N}$, $b_i$ is the $i^{\mathrm{th}}$ entry of $\mathbf{k}(\xquery)^{\top} \KbarInv \in \R^{1 \times N}$, $\nabla \mathbf{k}(\x) = \dkdx^{\top} \in \R^{n \times N}$, and $\ddkiddx$ is the partial derivative of (\ref{eq:kernel_deriv}) with respect to $\x$. For the case when the query state is noisy, $\xnoisy$, we use the predictive mean and variance \eqref{eq:gcbf_noisy} as described in Section \ref{subsec:gcbf_noisy_input}. Similarly, the corresponding partial derivatives are used to compute the Jacobians and Hessians from \eqref{eq:gcbf_noisy_mean_derivative}-\eqref{eq:gcbf_noisy_variance_derivative} to compute the Lie derivatives for the noisy query state. Given the nominal control input $\uu_\nom \in \R^3$ in (\ref{eq:second_order_integrator}), the QP below rectifies $\uu_\nom$ into $\uu_\rect \in \R^3$,
\vspace{-.1cm}
\begin{algorithm}
  Gaussian CBF-QP: \textit{Input modification}
	\begin{align}\label{eq:gcbf-qp}
		 \uu_\rect &= \argmin_{ \uu \in \R^3} \frac{1}{2} \norm{ \uu - \uu_{\nom} }^2 \ \ \text{s.t.} \\
			 & \ \  L_f^2 \hgp (\x) + L_g L_f \hgp (\x) \uu + \mathcal{K}^{\top} \mathcal{H} \geq 0, \notag
	\end{align}
\end{algorithm}

\noindent where $\mathcal{K} = [k_1 \ k_2]^{\top} \hspace{-0.1cm} \in \R^2$ is the coefficient gain vector, and $\mathcal{H} = [L_f \hgp(\x) \ \hgp(\x) ]^{\top} \hspace{-0.1cm}  \in \R^2$ is the Gaussian Lie derivative vector. The rectified input $\uu_{\mathrm{rect}}$ is used to compute the rectified setpoints using \eqref{eq:rectified_roll}, \eqref{eq:rectified_pitch}, \eqref{eq:rectified_thrust} which are then ultimately sent to the quadrotor.

\section{EXPERIMENTAL VERIFICATION}\label{sec:experiments}
In this section, we discuss the implementation of our method on a hardware quadrotor. We test our proposed formulation in three different scenarios. In the first setting, we demonstrate safe constrained control, where the Gaussian CBF is used to formulate the candidate function. These safe sets are arbitrarily designed and are not limited to taking any convex shape. For the second demonstration, we synthesize the safety function online by exploring the state space while avoiding static collisions. The quadrotor performs safe control within the constructed Gaussian CBF. And for the final scenario, we revisit the constrained control problem for a given candidate function, but in the presence of noisy position states. We compare the safe controlled behavior with a regular CBF. All experiments can be seen here: \small{\url{https://youtu.be/HX6uokvCiGk}}\normalsize.

\subsection{Experiment Setup}\label{subsec:experiment_setup}
We use the Crazyflie 2.1 as the hardware quadrotor. State estimation is performed onboard with the help of an external low-cost lighthouse positioning system \cite{crazyflie}. All computations are done remotely on a ground station equipped with an Intel i7-9800X at $4.4 \si{\giga\hertz}$ processor and 16 GB RAM. The $\texttt{crazyflie\_ros}$ API is used to communicate for interprocess communication, subscribing to pose information, and publishing setpoints over the Crazyradio PA USB dongle \cite{crazyflie_ros_hoenig2017}. Positions and velocities are collected at $20 \si{\hertz}$ with a data capacity set to $300$ samples. Gaussian CBF synthesis and rectification routine (\ref{eq:gcbf-qp}) are run on a parallel thread at $50 \si{\hertz}$ where solving the QP takes under $5 \si{\ms}$. Nominal setpoint commands are sent to the Crazyflie at $100 \si{\hertz}$ with the help of a Logitech joystick controller, which acts as the nominal controller in the QP formulation.

\subsection{Scenario A : Safe Control for Arbitrary Safe Sets}
The objective in this scenario is to demonstrate safe constrained control for any given arbitrary safe set using the Gaussian CBF formulation. We assume a high-level observer or planner provides a map from which we can sample (un)safe locations. For instance, take the example of a satellite view for a street or the indoor map of a warehouse unit, where the goal is to navigate an autonomous agent safely and provide safety specifications at the planning phase. The configuration space for position safety in such settings cannot be designed by hand effectively. By using the data driven design of Gaussian CBFs, we can construct safe sets based on the dataset allowing flexible realizations of safe sets based on information from a high-level planner or observer.

In this scenario, we construct a safety map in $2\mathrm{D}$, where the domain is chosen to be $[-0.35, 0.35]$ along each lateral axis, $\mathrm{x}$ and $\mathrm{y}$. We uniformly sample, $N = 200$, $(\mathrm{x},\mathrm{y})$ input points. The safety sample for each input coordinate is drawn from a uniform distribution, $y \sim \mathcal{U}(a,b)$, where $a = -1.0$ and $b = 2.0$ are the lowest and highest values respectively of the distribution. This gives a discrete $2\mathrm{D}$ safety map, where for each of the $200$ $(\mathrm{x},\mathrm{y})$ coordinates, we have an associated target safety sample. The safety maps are synthesized once and do not change during the experiment, so the sampling distance is set to $\tau = 0$. The hyperparameters are arbitrarily chosen to generate arbitrary safety maps: $\mathbf{L} = \mathrm{diag}(0.1, 0.1), \sigma_f = 1, \sigma_y = 0.01$. The Gaussian CBF characterizes the posterior safety map as follows,
\begin{align}\label{eq:gcbf_experiment}
\hgp(\x) := \mu(\x) - 4\sigma^2(\x).
\end{align}

We generate $3$ arbitrary safe sets and run $3$ separate experiments with the quadrotor always starting in the safe set as shown in Figure \ref{fig:hardware_noisefree_exp3}. For each experiment, we plot the flight trajectory of the quadrotor, its initial and final positions, and the $0$-level set of $\hgp$. The plot of $\hgp(t)$ is also shown for each experimental run. First, we point out that the data generates arbitrary non-convex safe sets. The posterior mean represents the safety belief in $\hgp$, whereas the posterior variance quantifies the notion of safety for regions in the state space where we have few or no samples. We can thus generate very safe realizations of candidate CBFs with high probabilistic bounds. The flight trajectory of the quadrotor always remains inside the safe set based on the QP formulation in \eqref{eq:gcbf-qp}. This can be verified by looking at the $\hgp(t)$ for each experiment, which is always non-negative. 

\begin{figure}[!t]
\centering
\vspace{-0.3cm}
\includegraphics[width=0.9\linewidth]{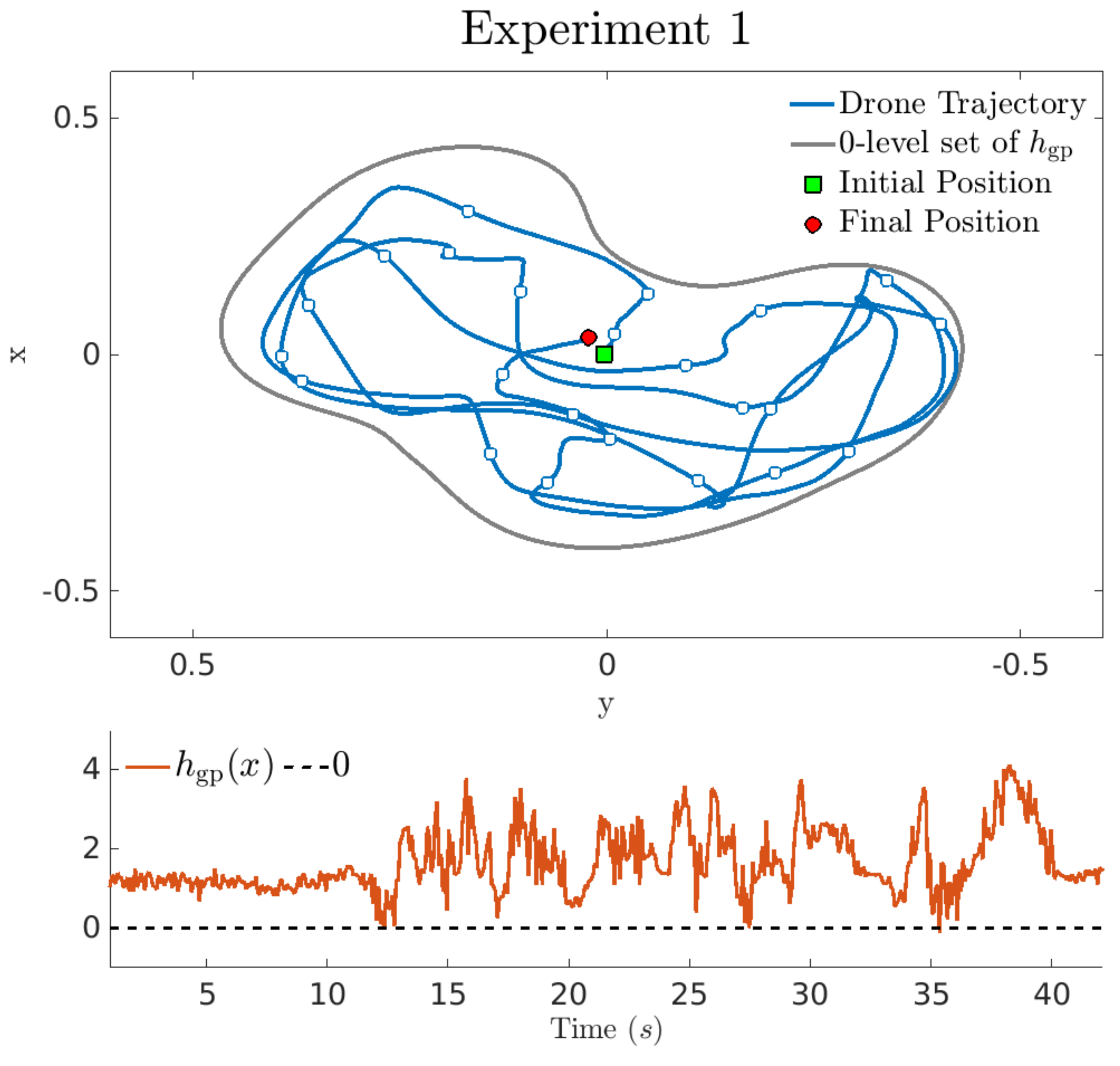}
\includegraphics[width=0.9\linewidth]{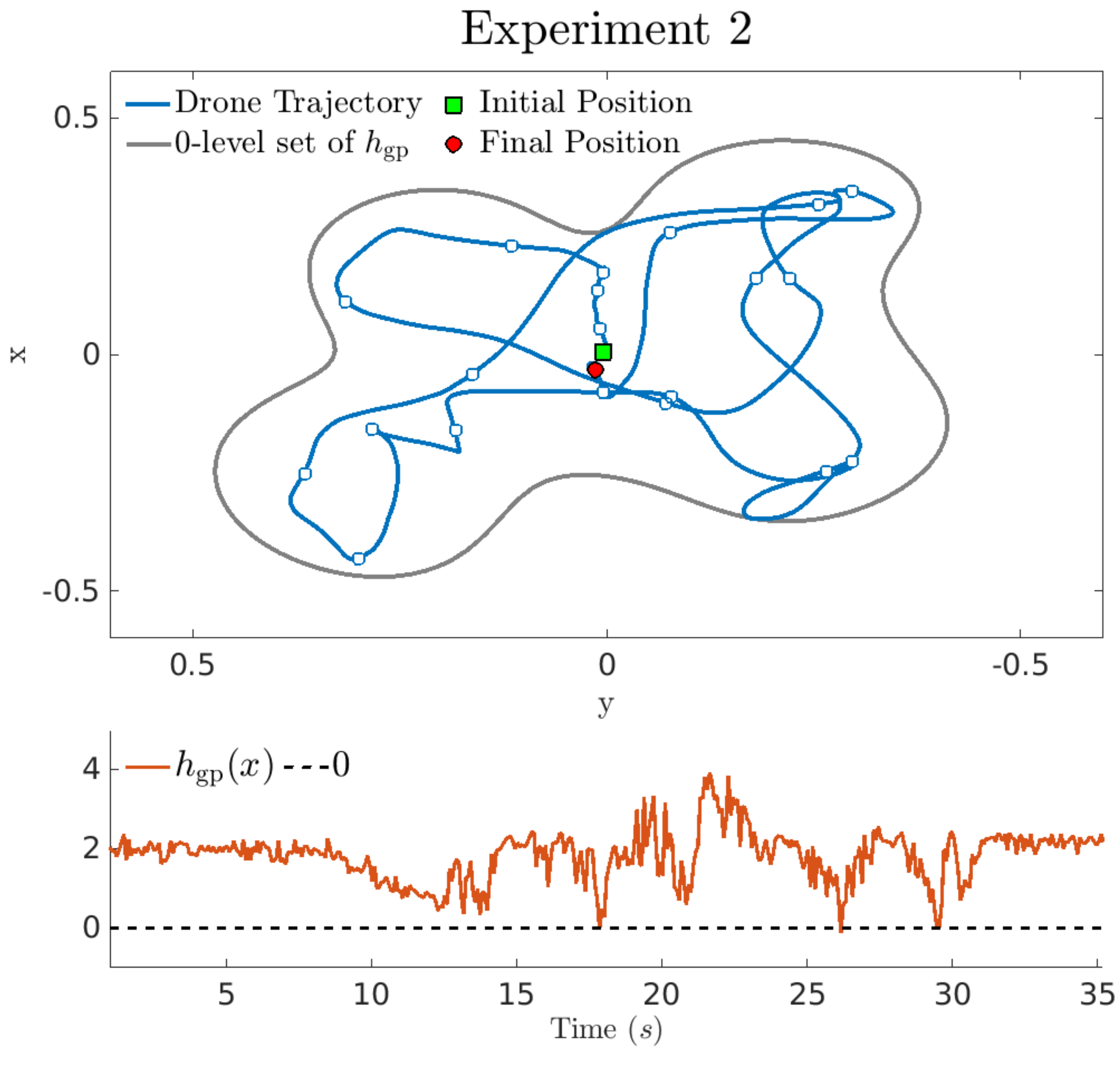}
\includegraphics[width=0.9\linewidth]{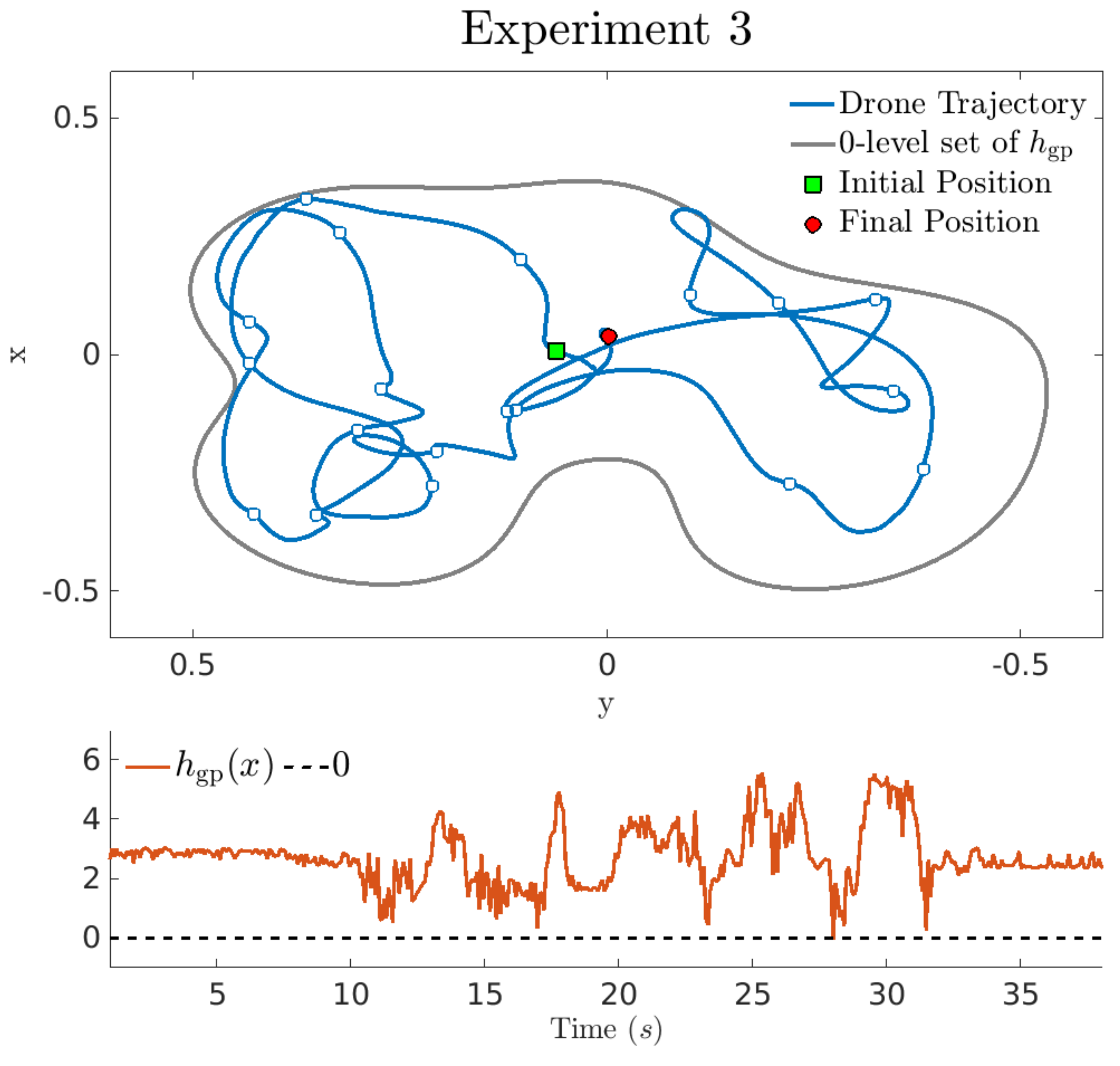}
\vspace{-0.3cm}
\caption{Experiments are run for arbitrary safe sets generated using the Gaussian CBF $\hgp$. The initial (\greensquare), mid-flight (\bluecirc), and final (\reddisc) quadrotor positions are shown. The $0$-level contour line is marked (bold gray). The temporal plot of $\hgp$ for each experiment shows that the quadrotor always remains inside the safe set.}
\label{fig:hardware_noisefree_exp3}
\vspace{-0.3cm}
\end{figure}

\subsection{Scenario B : Online Synthesis of Safe set with obstacle avoidance}

The objective is to synthesize the safety function online by exploring the state space and avoiding collisions. This has great practical significance in safe navigation since onboard sensors are limited in collecting data only within their local proximity. Therefore, we cannot know a priori the complete safety map. Moreover, the sensed data will also need to alter the safety decision boundary online. With the help of Gaussian CBFs, we can incrementally change the safe set as more data is collected. This allows expansion of the safe set in a non-convex manner, which is required in many practical scenarios involving unstructured environments.

In this scenario, $\hgp(\x)$ denotes the distance between the quadrotor and the obstacle. Here, we use two obstacles and therefore have two separate distance measurements,
\begin{align*}
d_a &= (\mathbf{r}_{xy} - \mathbf{p}_a)^{\top} (\mathbf{r}_{xy} - \mathbf{p}_a) - R^2_a,	 	\\
d_b &= (\mathbf{r}_{xy} - \mathbf{p}_b)^{\top} (\mathbf{r}_{xy} - \mathbf{p}_b) - R^2_b	,
\end{align*}
where $\mathbf{r}_{xy}$ is the quadrotor's lateral position, similarly $\mathbf{p}_{(\cdot)} \in \R^2$ is the obstacle's lateral position and $R_{(\cdot)} \in \R$ is the obstacle radius. The overall safety sample is taken as the noisy estimate by combining the two distance measurements,
\begin{align*}
y := d_a \cdot d_b + w, \ \  w \sim \mathcal{N}(0, \sigma_y^2), 
\end{align*}
where $\sigma^2_y \in \R$ is the noise variance. We take noisy sample observations to make the experiment more realistic. Moreover, we also wanted to highlight experimentally that, despite using noisy safety samples, our approach is robust enough to design non-convex safe sets online and ensure the system remains safe. If the quadrotor is closer to one of the obstacles, the product decreases, as a result reducing the safety metric. Note that, even though the obstacles are assumed to be convex, the final safe set constructed need not be convex. This is due to the noisy distance measurements observed and the posteriors being constructed online. The sampling distance is set to $\tau = 0.1$ and the hyperparameters are optimized by maximizing the log marginal likelihood using gradient methods \mbox{\cite{gp_textbook_Rasmussen2003}}. We use the same Gaussian CBF as \mbox{\eqref{eq:gcbf_experiment}} in scenario A to generate the posterior safe set online.

The quadrotor starts in an initial safe set containing only the initial position, see Figure \ref{fig:gcbf_safeset_expansion}. The quadrotor collects safety samples $y$ with the corresponding state $\x$ along its trajectory. With the data being collected, the safety function $\hgp(\x)$ and its associated safe set is constructed online using (\ref{eq:gcbf_experiment}). In regions where we have data, the safety belief is high and safety uncertainty is low. As more data is collected, the associated safe set expands. For unexplored regions in the state space, the safety uncertainty is high due to the high posterior variance. This aligns with the intuition that safety is not known with high confidence in unexplored spaces.

\begin{figure}[!t]
\centering
\vspace{0.1cm}
\includegraphics[width=0.85\linewidth]{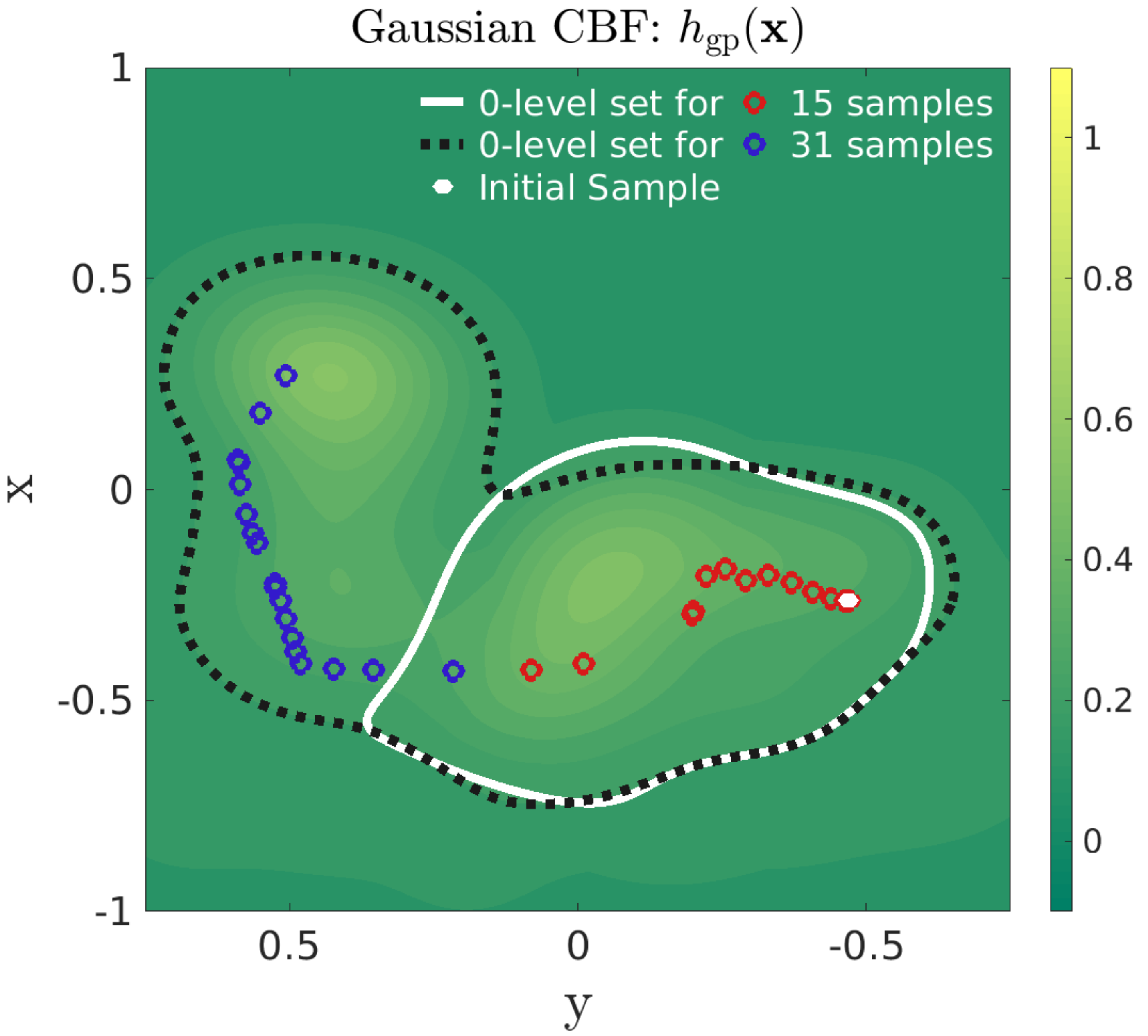}
\vspace{-0.1cm}
\caption{As more samples are collected, the safe set can expand arbitrarily and is not confined to a convex expansion. The contour plots are shown for the two data sample sets. The $0$-level set for $15$ samples is shown with bold white line and for $31$ samples with black dashed line.}
\label{fig:gcbf_safeset_expansion}
\end{figure}

\begin{figure}[!t]
\centering
\includegraphics[width=0.85\linewidth]{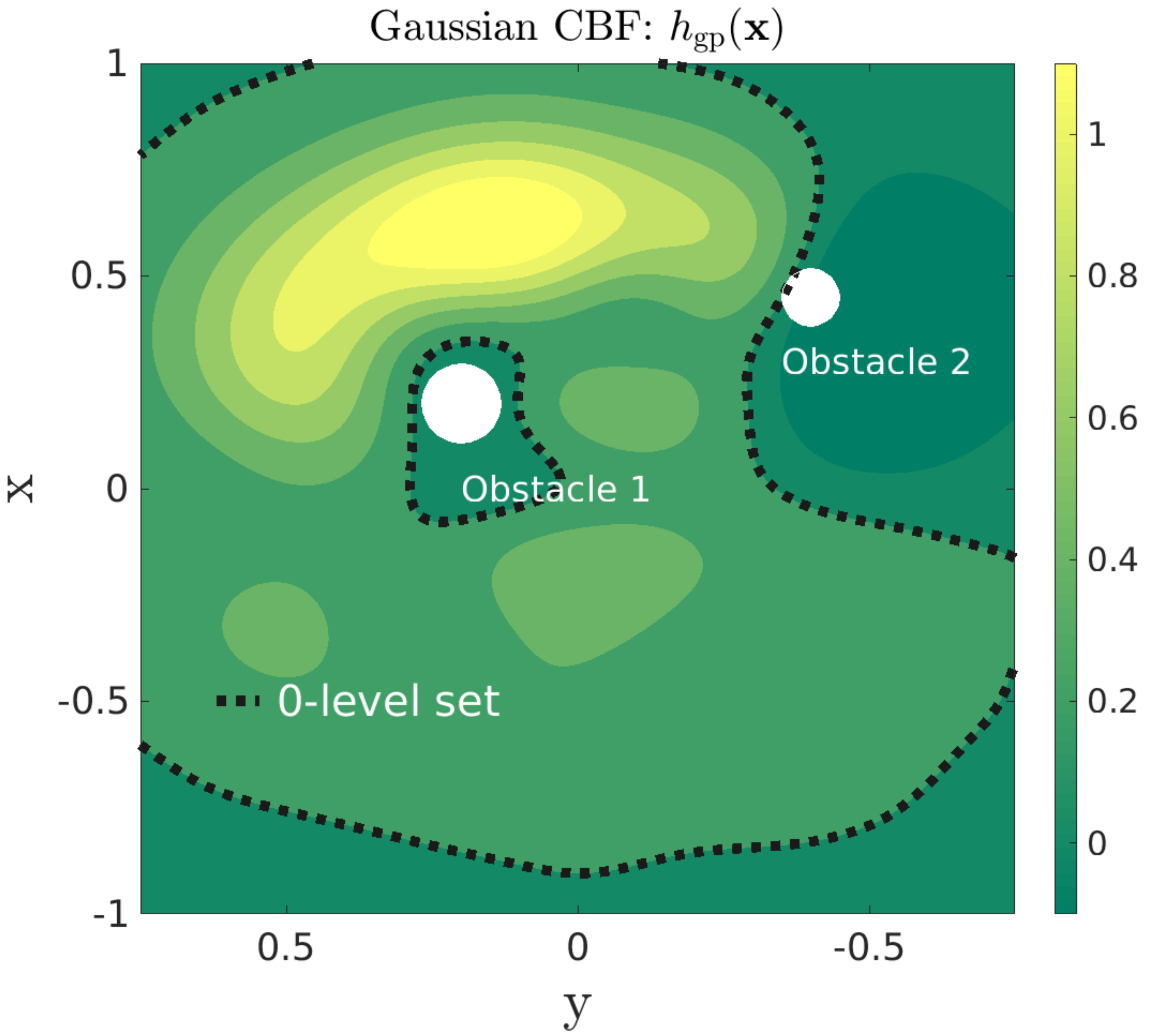}
\vspace{-0.2cm}
\caption{The contour plot for the Gaussian CBF with over $300$ samples collected is shown. After exploring the state space, we see that the obstacles are located in the $0$-sublevel sets. For the collected data set, the $0$-level set is depicted with black dashed line.}
\label{fig:gcbf_safeset_expansion_200_samples}
\vspace{-0.3cm}
\end{figure}

\begin{figure}[!t]
\centering
\includegraphics[width=0.9\linewidth]{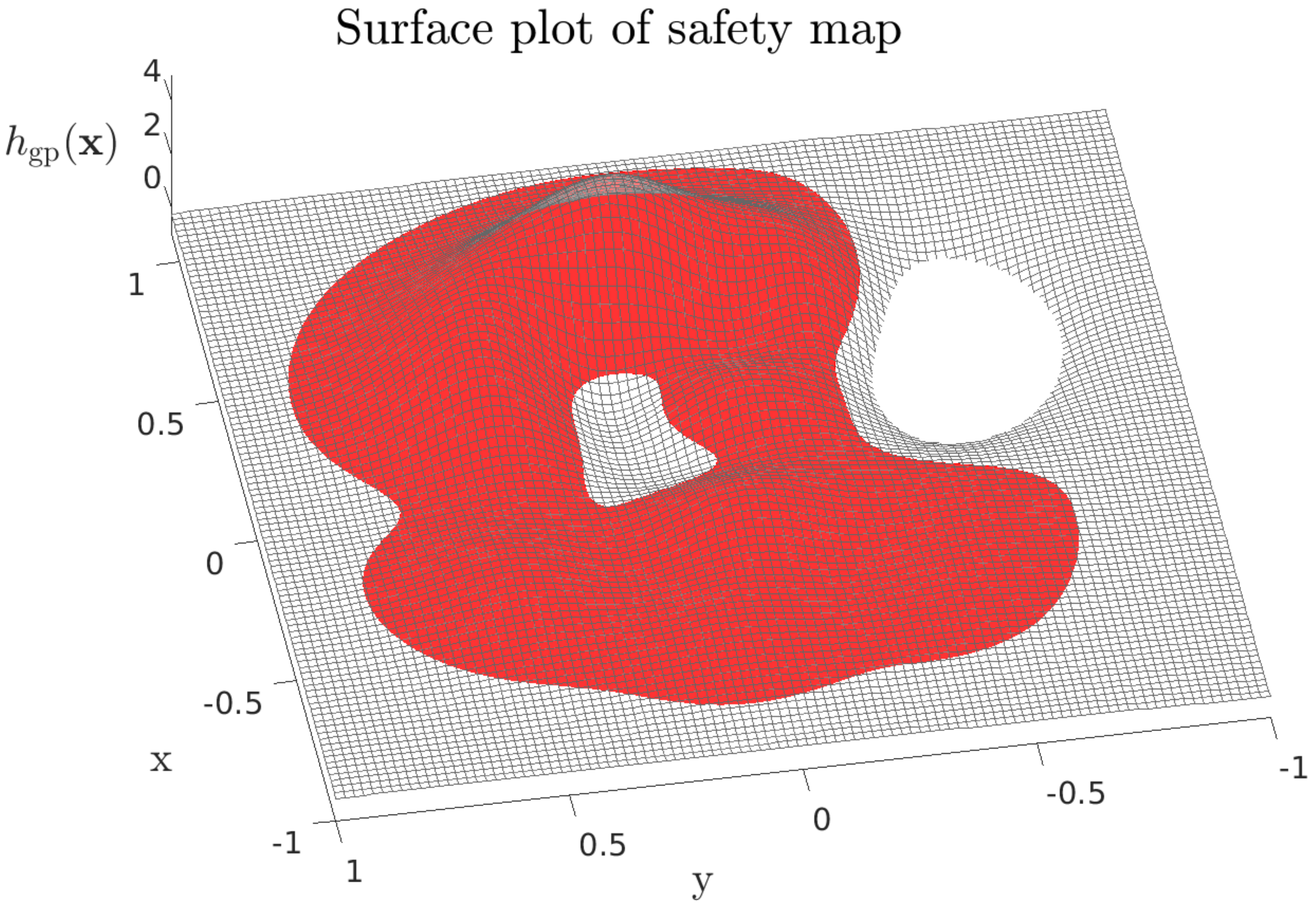}
\caption{Final safe set of $\hgp(\x)$ for the hardware experiment with over $300$ samples collected during the exploration process.}
\label{fig:gcbf_complete_safety_map}
\vspace{-0.4cm}
\end{figure}

\begin{figure}[!b]
\centering
\vspace{-0.3cm}
\includegraphics[width=1\linewidth]{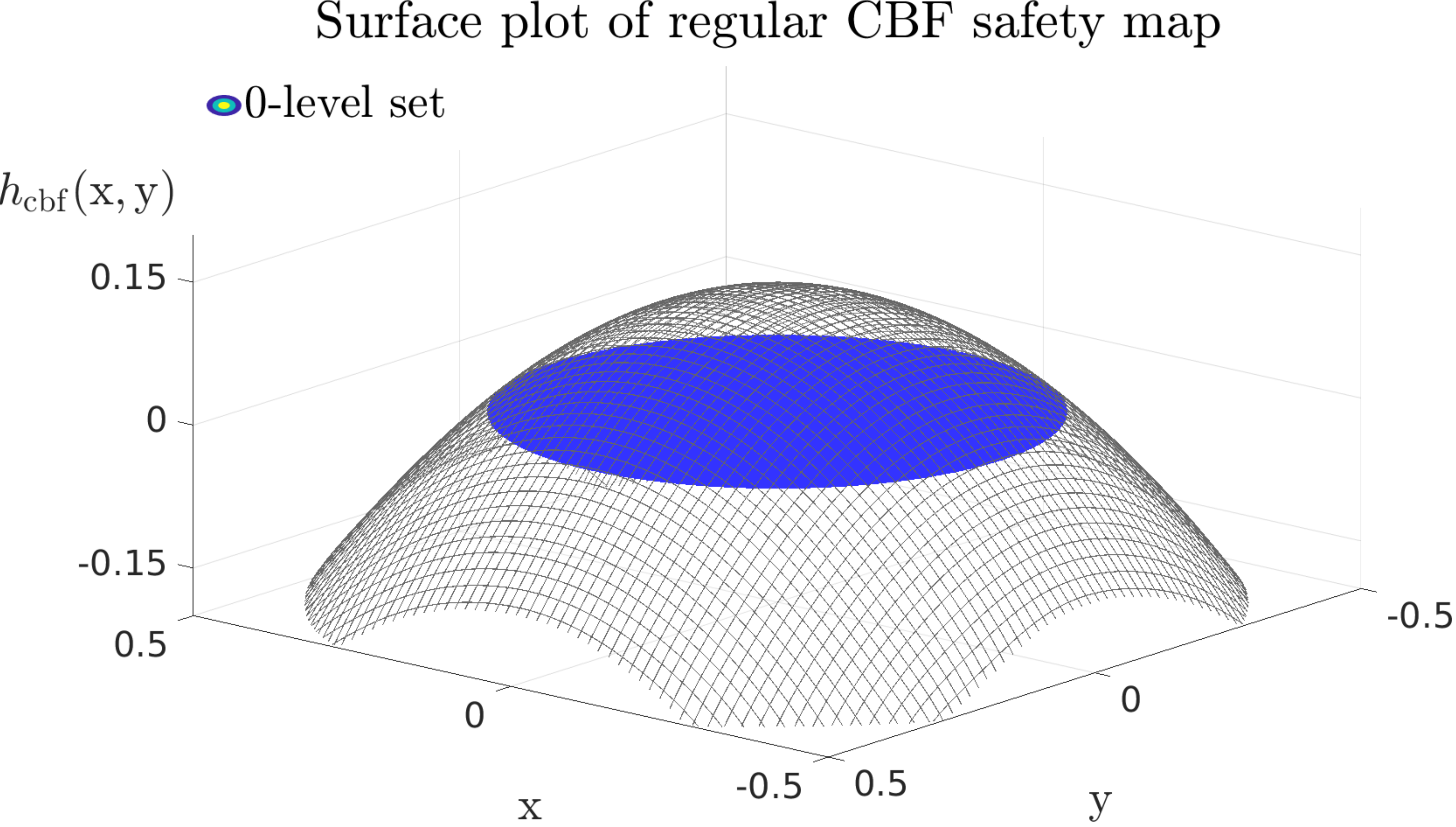}
\includegraphics[width=1\linewidth]{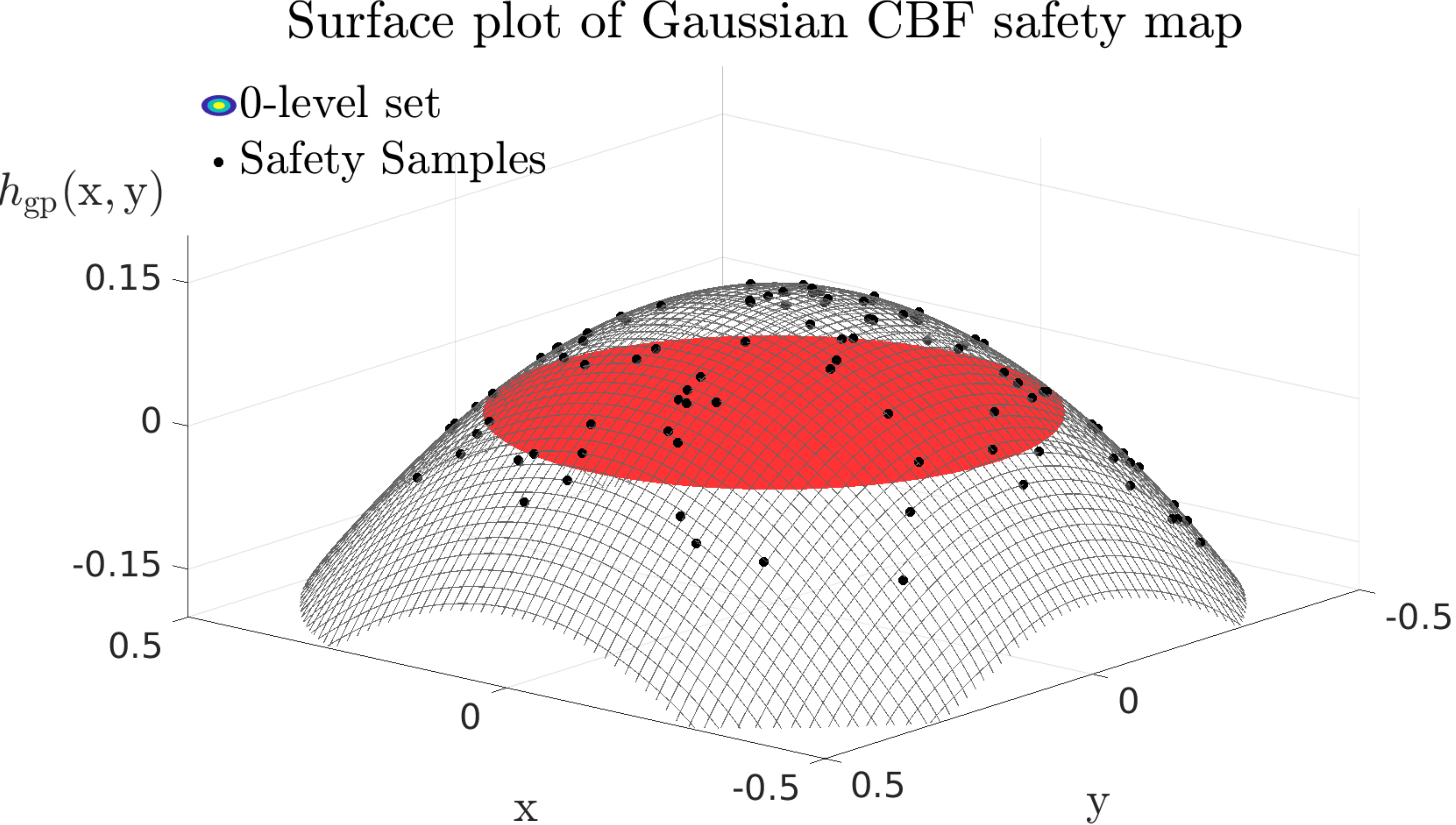}
\caption{Safe sets of CBF $h_{\mathrm{cbf}}(\x)$ (top) and Gaussian CBF $\hgp(\x)$ (bottom) computed by taking $100$ samples ($\bullet$) from $h_{\mathrm{cbf}}$.}
\label{fig:gcbf_safety_map_noisy_experiment}
\vspace{-0.2cm}
\end{figure}

\begin{figure}[!t]
\centering
\vspace{-0.2cm}
\includegraphics[width=0.9\linewidth]{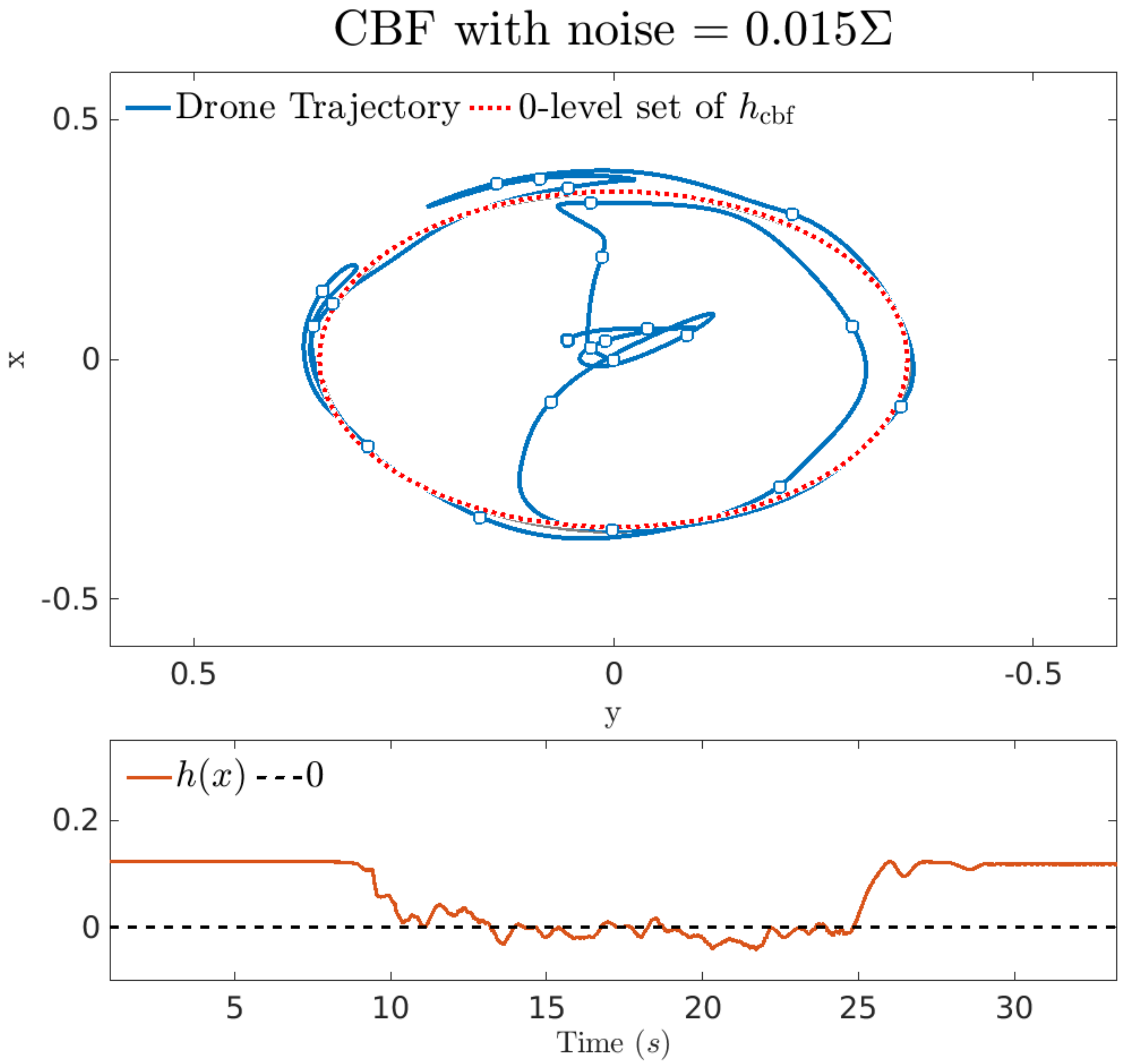}
\includegraphics[width=0.9\linewidth]{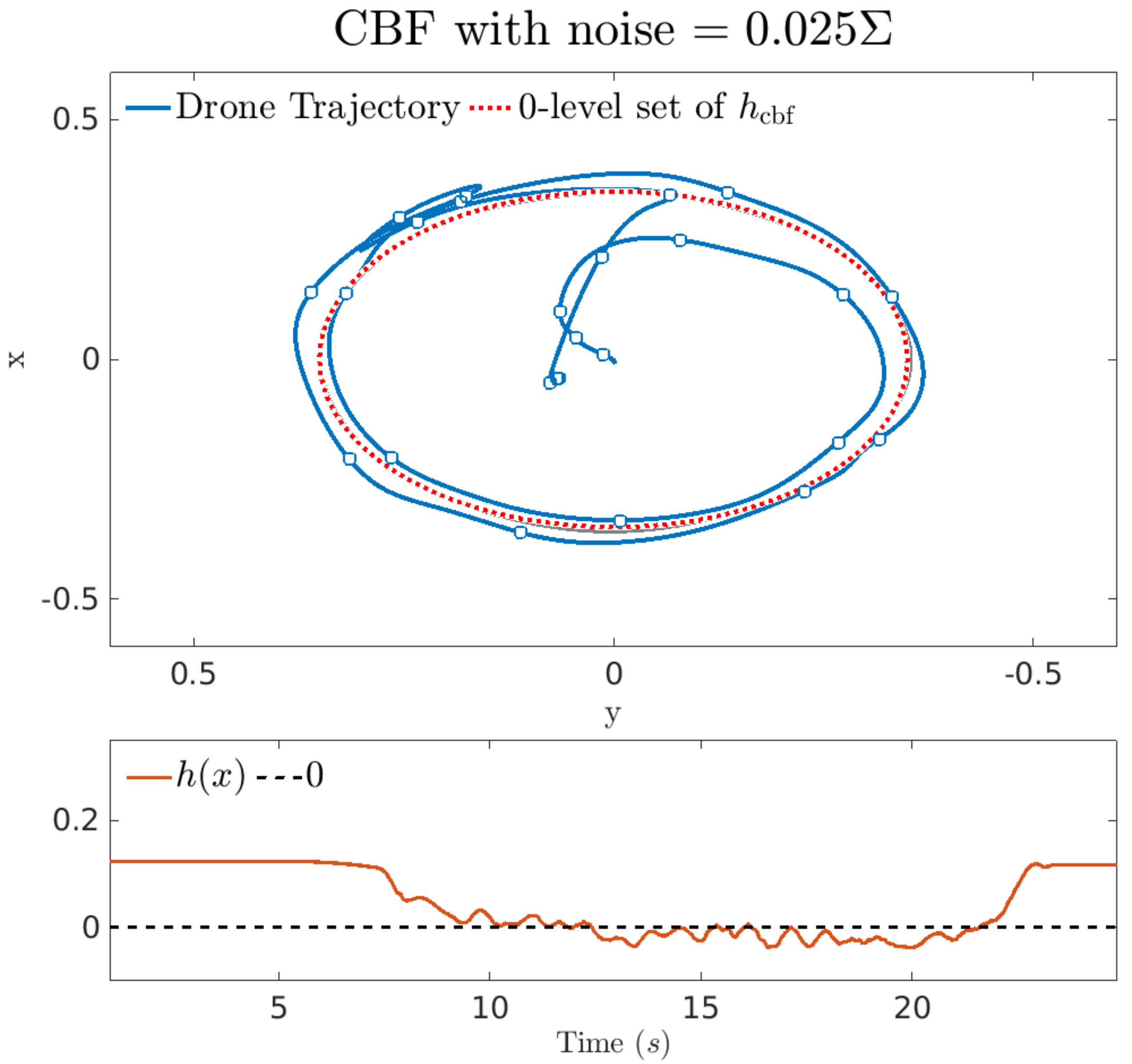}
\includegraphics[width=0.9\linewidth]{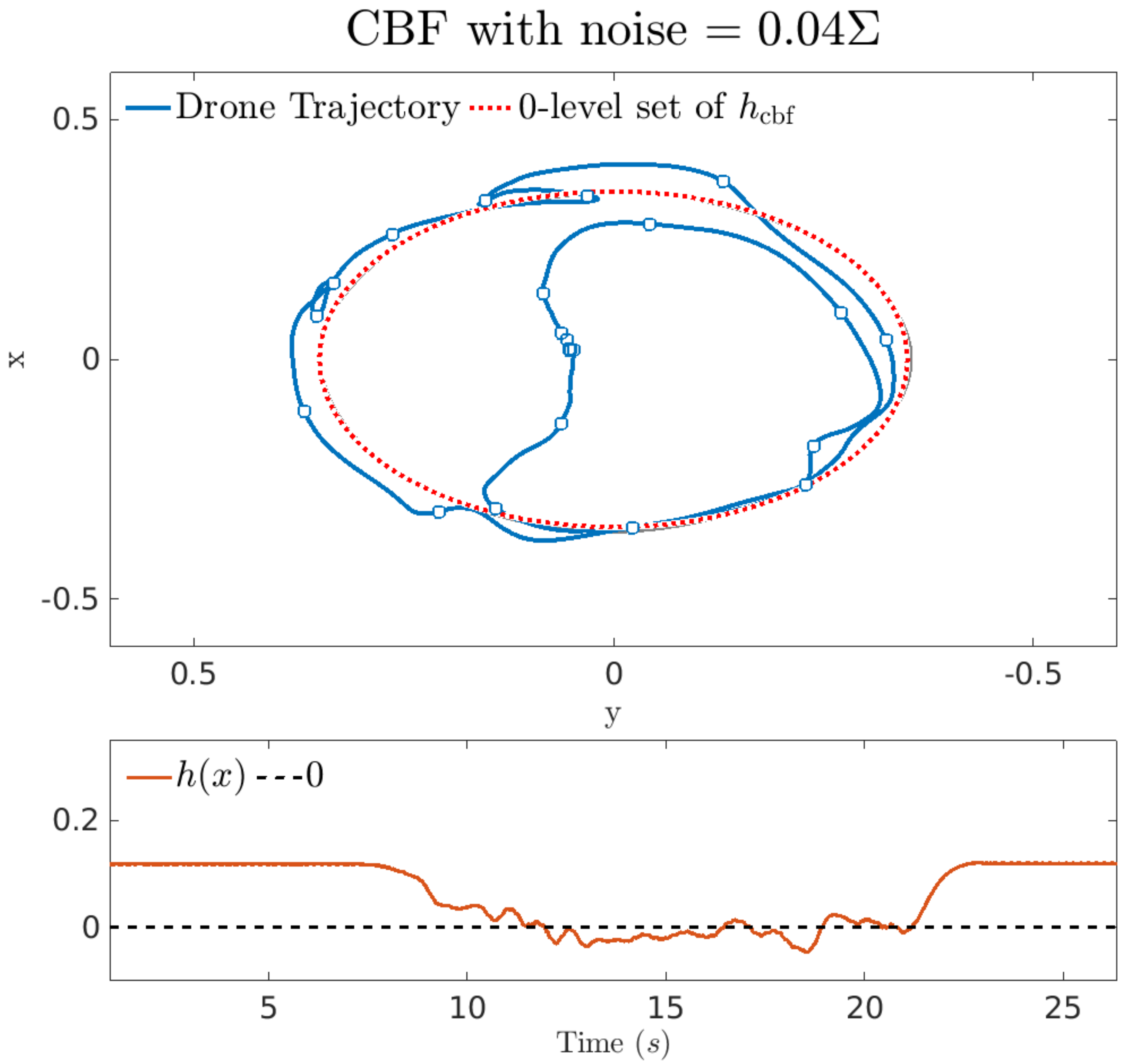}
\vspace{-0.1cm}
\caption{The quadrotor goes outside the boundary of the safe set $R = 0.35$ in presence of noise for CBF. The quadrotor trajectory is shown using ground truth position. The plot of $h_{\mathrm{cbf}}(t)$ using ground truth position data shows that the safety constraint is violated in the presence of noise.}
\label{fig:hardware_noisy04_cbf_exp1}
\end{figure}

\begin{figure}[!t]
\centering
\vspace{-0.2cm}
\includegraphics[width=0.9\linewidth]{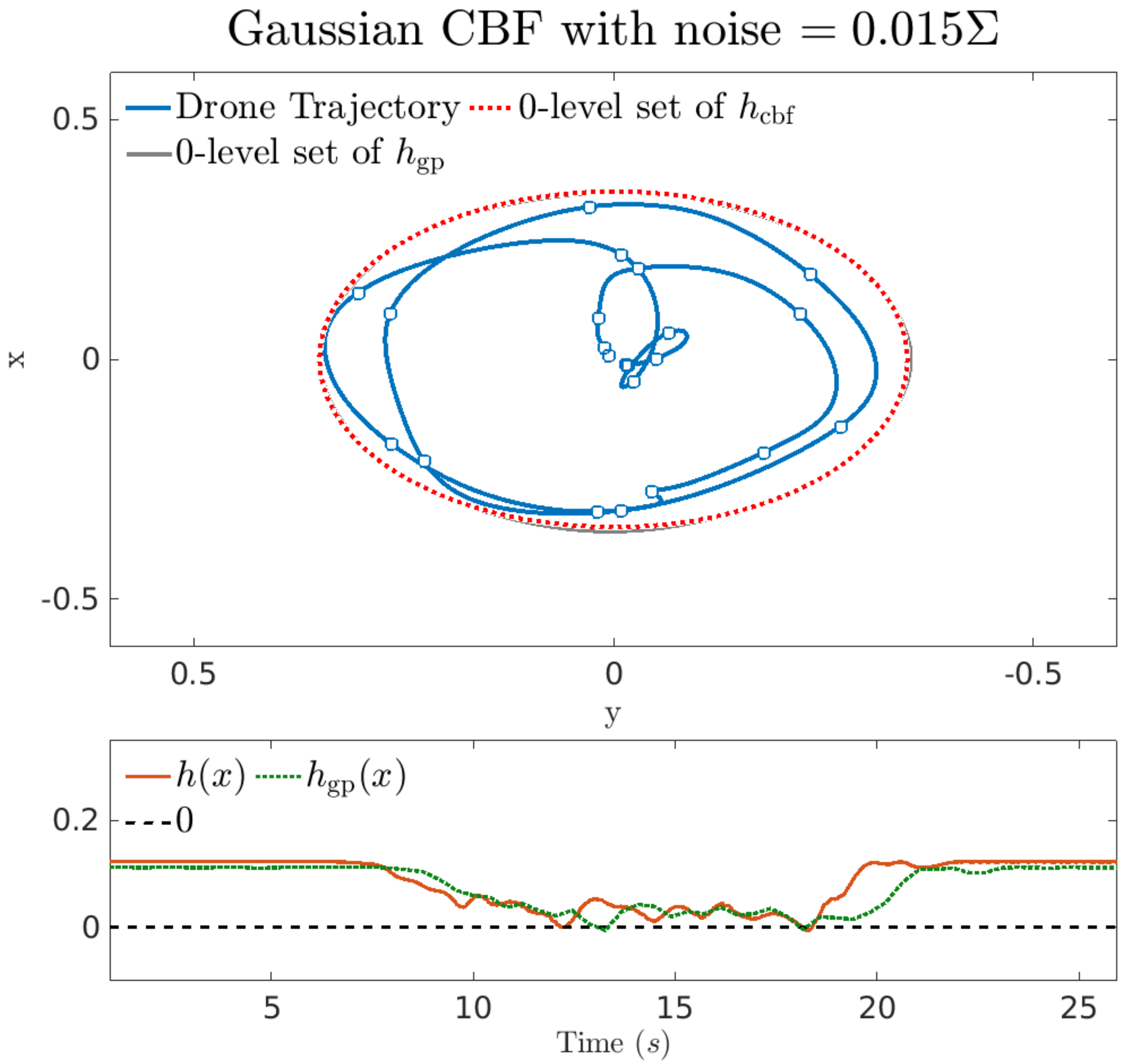}
\includegraphics[width=0.9\linewidth]{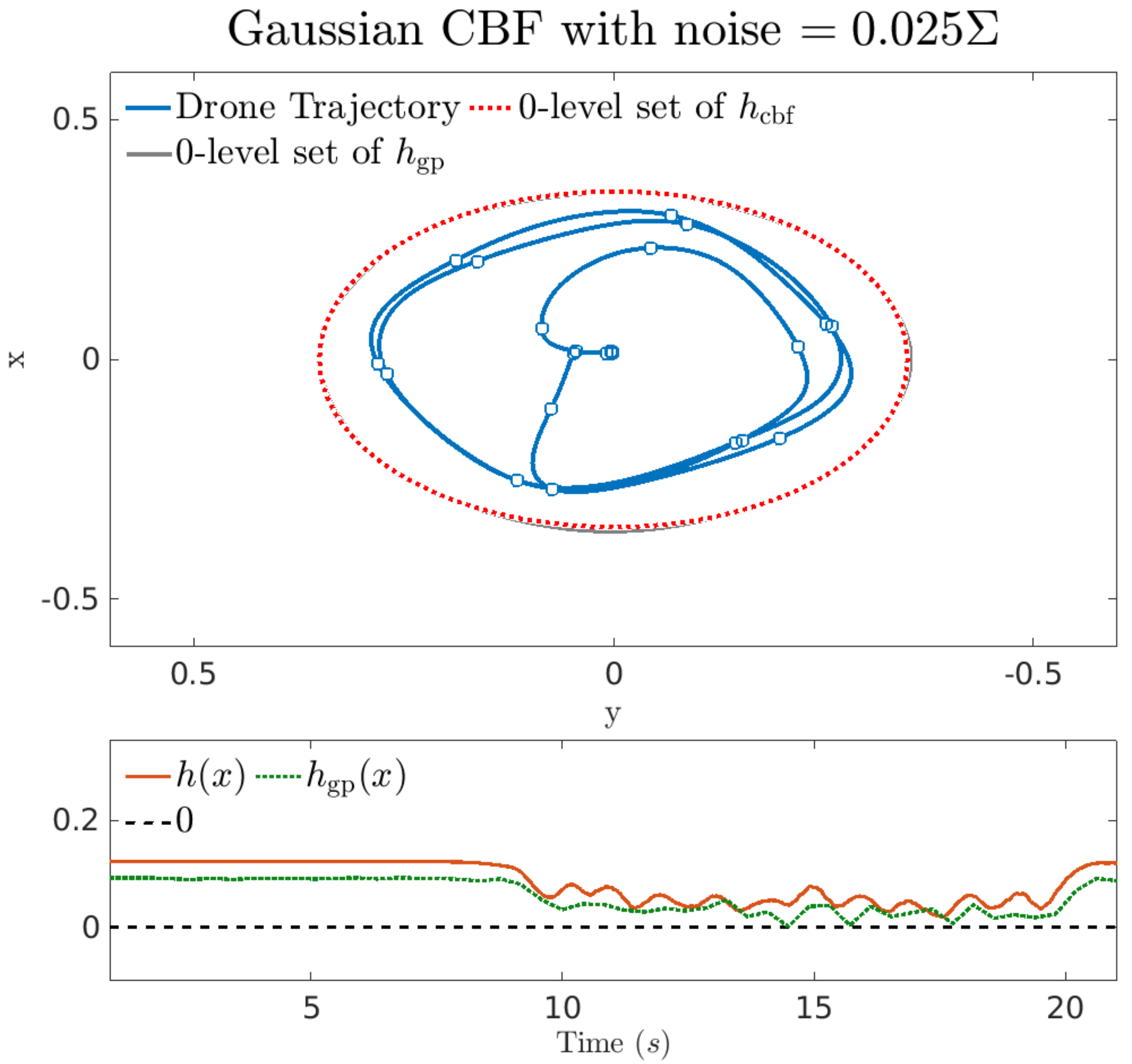}
\includegraphics[width=0.9\linewidth]{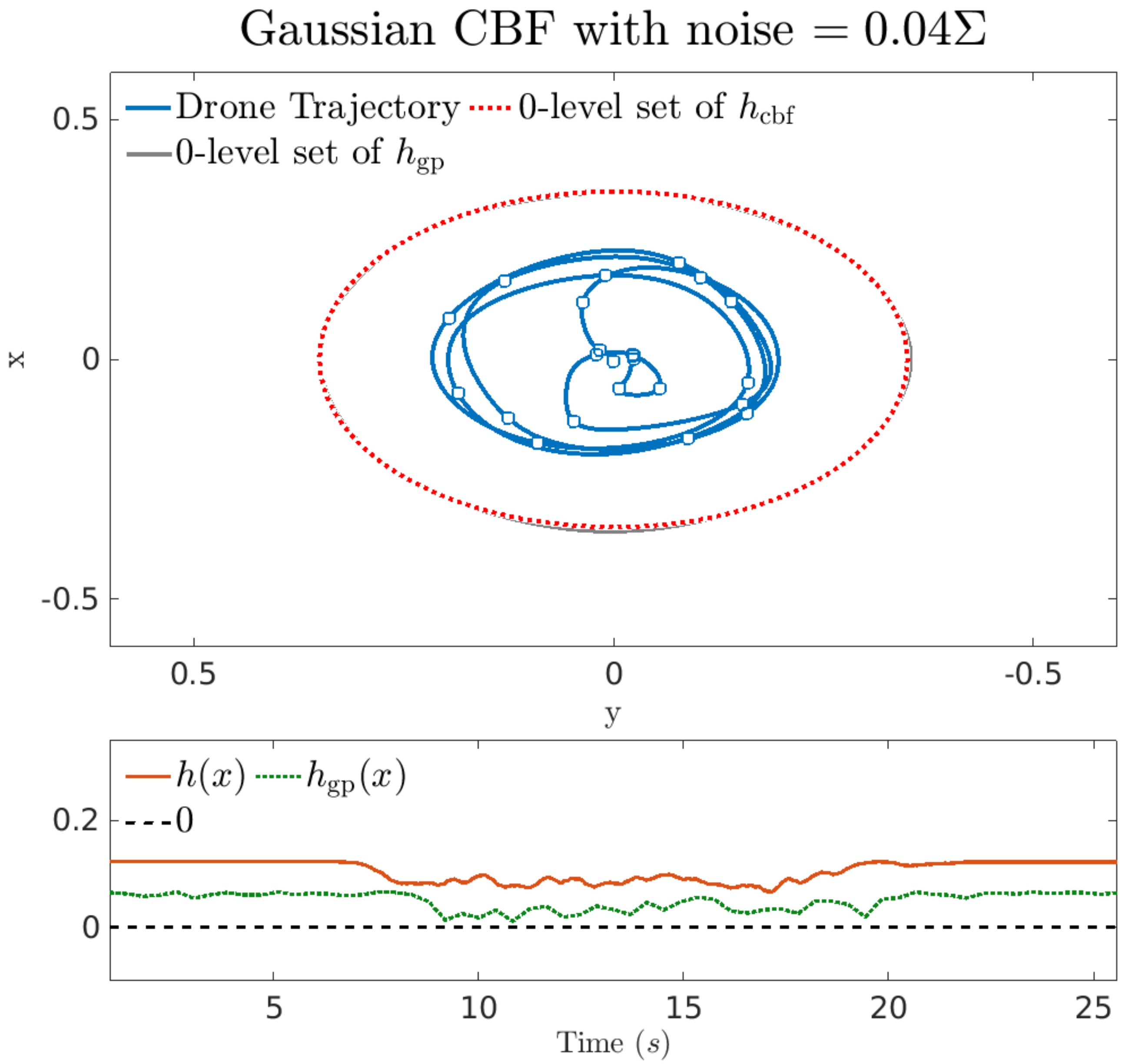}
\vspace{-0.1cm}
\caption{The quadrotor does not go outside the circular boundary of $R = 0.35$ for Gaussian CBF because a more conservative safe set is modeled by $\hgp$. When looking at $\hgp(t)$ and $h_{\mathrm{cbf}}(t)$ on the ground truth position, the quadrotor is further away from the boundary of the safe set using $h_{\mathrm{cbf}}$.
}\label{fig:hardware_noisy04_gcbf_exp1}
\end{figure}

As seen in Figure \ref{fig:gcbf_safeset_expansion}, we see two sets of data samples during the quadrotor's flight. Initially, the quadrotor explored a small region in the state space containing $15$ samples. The $0$-level set is shown in bold white. Then the quadrotor continues exploring the state space further, thus expanding the safe set. Notice that the expansion of the safe set is not limited to any convex expansion. The $0$-level set for the larger safe set of all $31$ samples collected thus far is marked with dashed black line in Figure \ref{fig:gcbf_safeset_expansion}. The exploration process continues, and the quadrotor is able to detect regions in the state space which are unsafe, particularly, when it gets closer to the obstacles. The safe set constructed after collecting more than $200$ samples is shown in Figure \ref{fig:gcbf_safeset_expansion_200_samples}. As seen in the figure, the obstacles are located in the $0$-sublevel sets which are also the unsafe regions. The final safety map for the Gaussian CBF using (\ref{eq:gcbf}) is shown in Figure \ref{fig:gcbf_complete_safety_map}.

\subsection{Scenario C : Safe Control in presence of noisy state}
For the final experiment, we consider the problem of safe constrained control in the presence of noisy position states. In many practical applications, measurement noise is a common occurrence which can degrade system performance and lead to unsafe consequences. This is a particularly hard problem because we consider noise for both the input and observations to the GPs. In our current scenario, this would be noise for the safety samples and the system query state, in particular, the position state of the quadrotor. The query state is given by $\xnoisy := [ \mathbf{\underline{r}}^{\top} \ \mathbf{\dot{r}}^{\top} ]^{\top}$, where $\mathbf{\underline{r}} \sim \mathcal{N}(\mathbf{r}, \bm{\Sigma})$ is the Gaussian distributed noisy position state. We have observability of the noisy query state $\xnoisy$ and assume knowledge of the noise covariance matrix $\bm{\Sigma}$ for the position states.

The safety objective is to keep the quadrotor inside a circle of radius $D$. We first construct a standard CBF using the following candidate function,
\begin{align*}
h_{\mathrm{cbf}} := D^2 - x^2 - y^2,
\end{align*}
where $D = 0.35$ is the safety boundary radius. Next, $100$ samples are sampled randomly with Gaussian noise from this candidate CBF, $y = \mathcal{N}(h_\mathrm{cbf}, 0.03)$. The posterior GP is then computed, along with hyperparameter optimization, using \eqref{eq:gcbf_experiment} from this dataset which forms the Gaussian CBF. Both the CBFs as well as safety samples are shown in Figure \ref{fig:gcbf_safety_map_noisy_experiment}. In this scenario we choose $\tau = 0$, since the safe sets are fixed and require no online sampling.

We perform $3$ separate experiments on the quadrotor for the standard CBF using different values for the noisy position states. The different noise covariance values in the experiments are $\bm{\Sigma} = [ 0.015, \ 0.025 ,\ 0.04] \mathbf{I}_3 \in \R^{3 \times 3}$. For each experiment in Figure \ref{fig:hardware_noisy04_cbf_exp1}, the quadrotor starts inside the safe set and then violates safety at the boundary when subjected to noisy position states. As the value of the noise increases, the quadrotor exhibits more violation of the safety constraint by going outside the safety boundary radius. Since CBFs rectify the control input pointwise and do not account for any measurement noise in its formulation, measurement noise in the position states degrades the safety performance. For each experiment, the temporal behavior of $h_{\mathrm{cbf}}(t)$ verifies that the safety constraint is violated due to the negative values.

We next look at the experiments using Gaussian CBFs for the same values of noise covariance used above. In Figure \ref{fig:hardware_noisy04_gcbf_exp1}, we see that for every experiment, the quadrotor remains confined within a more conservative safe set, which is inside the primary safety boundary radius. This occurs because in the presence of noisy input (query) to the GPs, the posterior mean uses a more conservative weighted kernel $\mathbf{q}$ in \eqref{eq:gcbf_noisy}, whose entries $q_i$ have coefficients lesser than the coefficients of the SE kernel.

The coefficient of an entry $q_i$ given by $\sigma_f^2 | \ \bm{\Sigma} \mathbf{L}^{-2} + \mathbf{I}_n  \ |^{\frac{1}{2}}$ is always lesser than $\sigma_f^2$ since the eigenvalues of $\bm{\Sigma} \mathbf{L}^{-2} + \mathbf{I}_n$ are always greater than $1$ ($ \bm{\Sigma} \mathbf{L}^{-2} $ is a positive definite matrix added to the identity matrix). Therefore, the determinant is always positive and greater than unity. Intuitively, this makes sense since the GP posterior distribution is not overfitting to the noisy input query states, thus leading to a more conservative posterior estimation. We also plot $h_{\mathrm{cbf}}$ as a function of time using the trajectory of the quadrotor rectified under $\hgp$. The quadrotor does not get close to the circular boundary, since it is constrained conservatively by $\hgp$, thus ensuring that the original safety requirement is met. Indeed, if the noise becomes very large, then the safe set may not exist under $\hgp$. Determining the bounds on the measurement noise is currently outside the scope of this study and is left for future investigation. Here, we are primarily interested in achieving safe control in the presence of noisy query states with nonempty compact safe sets.

\subsection{Note on Complexity}
GPs are known to scale cubically with data i.e., $\mathcal{O}(N^3)$, where $N$ is the number of datapoints. 
This complexity arises due to the inverse operation in (\ref{eq:gcbf}) for the covariance matrix $\Kbar$. As the number of data points increases, this could potentially cause a computational bottleneck. We address this with the help of rank-$1$ inverse method. For example, given $500$ samples, it only takes $25 \si{\ms}$ to synthesize the Gaussian CBF. Thereafter, we handle more datapoints by performing rank-$1$ inverse approximations giving tremendous boost in computational speed. For every new data point included, it only takes $4 \si{\ms}$ to compute the inverse covariance matrix.

\section{CONCLUSION}\label{sec:conclusion}
In this study, we proposed a framework for the synthesis of a safety function in a data-driven manner using GPs. The formulation requires safety samples as opposed to the traditional requirement of a smooth function. The newly formulated CBF called the Gaussian CBF was constructed by using a flexible GP prior. GPs provide the posterior mean and variance which serve as analogues for safety belief and uncertainty in our methodology. By exploiting the kernel properties in the posterior mean and variance, we were able to analytically compute the associated Lie derivatives. The Lie derivatives served as constraints in formulating a QP for rectifying the given nominal control input. We empirically verified our framework on a hardware quadrotor platform without risking any expensive system failures. We verify our approach on three different scenarios. The objective in each experiment was to synthesize a candidate safety function using GPs. We successfully show safe control for arbitrary safe sets synthesized using Gaussian CBFs, online synthesis of a Gaussian CBF as more data is collected in a collision avoidance problem, and juxtapose a Gaussian CBF with a regular CBF for constrained control in the presence of noisy position states. The quadrotor always remained inside the safe sets associated with the synthesized Gaussian CBFs. As part of future work, we would like to delve deeper and develop Gaussian CBFs for stochastic settings as well, and compare against other CBF techniques.

\section{ACKNOWLEDGEMENTS}
This research was supported by the US National Science Foundation under Grant S\&AS:1723997.

\bibliographystyle{ieeetr}
\bibliography{root_general22}

\begin{thebibliography}{10}

\bibitem{controls_sets_Blanchini2008}
F.~Blanchini and S.~Miani, {\em Set-theoretic methods in control}.
\newblock Springer, 2008.

\bibitem{cbf_Ames2017}
A.~D. Ames, X.~Xu, J.~W. Grizzle, and P.~Tabuada, ``Control barrier function
  based quadratic programs for safety critical systems,'' {\em IEEE
  Transactions on Automatic Control}, vol.~62, no.~8, pp.~3861--3876, 2016.

\bibitem{cbf_safe_swarm_quadrotors_Li2017}
L.~Wang, A.~D. Ames, and M.~Egerstedt, ``Safe certificate-based maneuvers for
  teams of quadrotors using differential flatness,'' in {\em International
  Conference on Robotics and Automation}, pp.~3293--3298, IEEE, 2017.

\bibitem{cbf_biped_Hsu2015}
S.-C. Hsu, X.~Xu, and A.~D. Ames, ``Control barrier function based quadratic
  programs with application to bipedal robotic walking,'' in {\em 2015 American
  Control Conference}, pp.~4542--4548, IEEE, 2015.

\bibitem{cbf_barrier_certificates_prajna2007}
S.~Prajna, A.~Jadbabaie, and G.~J. Pappas, ``A framework for worst-case and
  stochastic safety verification using barrier certificates,'' {\em IEEE
  Transactions on Automatic Control}, vol.~52, no.~8, pp.~1415--1428, 2007.

\bibitem{cbf_constructive_wieland2007}
P.~Wieland and F.~Allg{\"o}wer, ``Constructive safety using control barrier
  functions,'' {\em IFAC Proceedings Volumes}, vol.~40, no.~12, pp.~462--467,
  2007.

\bibitem{cbf_safe_swarm_groundbots_Li2015}
U.~Borrmann, L.~Wang, A.~D. Ames, and M.~Egerstedt, ``Control barrier
  certificates for safe swarm behavior,'' {\em IFAC-PapersOnLine}, vol.~48,
  no.~27, pp.~68--73, 2015.

\bibitem{so3_conic_cbf_Tatsuya2020}
T.~Ibuki, S.~Wilson, A.~D. Ames, and M.~Egerstedt, ``Distributed collision-free
  motion coordination on a sphere: A conic control barrier function approach,''
  {\em IEEE Control Systems Letters}, vol.~4, no.~4, pp.~976--981, 2020.

\bibitem{cbf_2dquad_Wu2016}
G.~Wu and K.~Sreenath, ``Safety-critical control of a planar quadrotor,'' in
  {\em American Control Conference}, pp.~2252--2258, IEEE, 2016.

\bibitem{cbf_3dquad_Wu2016}
G.~Wu and K.~Sreenath, ``Safety-critical control of a 3d quadrotor with
  range-limited sensing,'' in {\em Dynamic Systems and Control Conference},
  ASME Digital Collection, 2016.

\bibitem{safe_learning_roa_felix2016}
F.~Berkenkamp, R.~Moriconi, A.~Schoellig, and A.~Krause, ``Safe learning of
  regions of attraction for uncertain, nonlinear systems with gaussian
  processes,'' pp.~4661--4666, 12 2016.

\bibitem{bo_gait_lizotte2007}
D.~J. Lizotte, T.~Wang, M.~H. Bowling, and D.~Schuurmans, ``Automatic gait
  optimization with gaussian process regression.,'' in {\em IJCAI}, vol.~7,
  pp.~944--949, 2007.

\bibitem{bo_snake_matthew2011}
M.~Tesch, J.~Schneider, and H.~Choset, ``Using response surfaces and expected
  improvement to optimize snake robot gait parameters,'' in {\em International
  Conference on Intelligent Robots and Systems}, pp.~1069--1074, IEEE, 2011.

\bibitem{bo_quadrotor_berkenkamp2016}
F.~Berkenkamp, A.~P. Schoellig, and A.~Krause, ``Safe controller optimization
  for quadrotors with gaussian processes,'' {\em International Conference on
  Robotics and Automation}, 05 2016.

\bibitem{opt_sos_ahmadi2016}
A.~A. Ahmadi and A.~Majumdar, ``Some applications of polynomial optimization in
  operations research and real-time decision making,'' {\em Optimization
  Letters}, vol.~10, no.~4, pp.~709--729, 2016.

\bibitem{opt_sos_ahmadi2019}
A.~A. Ahmadi and A.~Majumdar, ``Dsos and sdsos optimization: more tractable
  alternatives to sum of squares and semidefinite optimization,'' {\em SIAM
  Journal on Applied Algebra and Geometry}, vol.~3, no.~2, pp.~193--230, 2019.

\bibitem{cbf_supervisedml_Srinivasan2020}
M.~Srinivasan, A.~Dabholkar, S.~Coogan, and P.~A. Vela, ``Synthesis of control
  barrier functions using a supervised machine learning approach,'' in {\em
  2020 IEEE/RSJ International Conference on Intelligent Robots and Systems},
  pp.~7139--7145, 2020.

\bibitem{cbf_expertdata_Robey2020}
A.~Robey, H.~Hu, L.~Lindemann, H.~Zhang, D.~V. Dimarogonas, S.~Tu, and
  N.~Matni, ``Learning control barrier functions from expert demonstrations,''
  in {\em 2020 59th IEEE Conference on Decision and Control}, pp.~3717--3724,
  2020.

\bibitem{cbf_robust_hybrid_Robey2021}
A.~Robey, L.~Lindemann, S.~Tu, and N.~Matni, ``Learning robust hybrid control
  barrier functions for uncertain systems,'' {\em IFAC-PapersOnLine}, vol.~54,
  no.~5, pp.~1--6, 2021.

\bibitem{cbf_adaptive_xiao2021}
W.~Xiao, C.~Belta, and C.~G. Cassandras, ``Adaptive control barrier
  functions,'' {\em IEEE Transactions on Automatic Control}, vol.~67, no.~5,
  pp.~2267--2281, 2021.

\bibitem{cbf_mpc_polytopes_thirugnanam2022}
A.~Thirugnanam, J.~Zeng, and K.~Sreenath, ``Safety-critical control and
  planning for obstacle avoidance between polytopes with control barrier
  functions,'' in {\em IEEE International Conference on Robotics and
  Automation}, 2022.

\bibitem{cbf_mpc_racing_he2022}
S.~He, J.~Zeng, and K.~Sreenath, ``Autonomous racing with multiple vehicles
  using a parallelized optimization with safety guarantee using control barrier
  functions,'' in {\em IEEE International Conference on Robotics and
  Automation}, 2022.

\bibitem{cbf_stochastic_clark2021}
A.~Clark, ``Control barrier functions for stochastic systems,'' {\em
  Automatica}, vol.~130, p.~109688, 2021.

\bibitem{cbf_robust_HJ_choi2021}
J.~J. Choi, D.~Lee, K.~Sreenath, C.~J. Tomlin, and S.~L. Herbert, ``Robust
  control barrier--value functions for safety-critical control,'' in {\em 2021
  60th IEEE Conference on Decision and Control (CDC)}, pp.~6814--6821, IEEE,
  2021.

\bibitem{neural_formal_abate2020}
A.~Abate, D.~Ahmed, M.~Giacobbe, and A.~Peruffo, ``Formal synthesis of lyapunov
  neural networks,'' {\em IEEE Control Systems Letters}, vol.~5, no.~3,
  pp.~773--778, 2020.

\bibitem{neural_synthesis_zhao2020}
H.~Zhao, X.~Zeng, T.~Chen, and Z.~Liu, ``Synthesizing barrier certificates
  using neural networks,'' in {\em Proceedings of the 23rd International
  Conference on Hybrid Systems: Computation and Control}, pp.~1--11, 2020.

\bibitem{neural_convex_tsukamoto2020}
H.~Tsukamoto and S.-J. Chung, ``Neural contraction metrics for robust
  estimation and control: A convex optimization approach,'' {\em IEEE Control
  Systems Letters}, vol.~5, no.~1, pp.~211--216, 2020.

\bibitem{neural_lyapunov_gaby2021}
N.~Gaby, F.~Zhang, and X.~Ye, ``Lyapunov-net: A deep neural network
  architecture for lyapunov function approximation,'' {\em arXiv preprint
  arXiv:2109.13359}, 2021.

\bibitem{cbf_gaussian_castaneda2021}
F.~Castaneda, J.~J. Choi, B.~Zhang, C.~J. Tomlin, and K.~Sreenath, ``Gaussian
  process-based min-norm stabilizing controller for control-affine systems with
  uncertain input effects and dynamics,'' in {\em 2021 American Control
  Conference (ACC)}, pp.~3683--3690, IEEE, 2021.

\bibitem{cbf_gaussian_variance_Khan2021}
M.~Khan, T.~Ibuki, and A.~Chatterjee, ``Safety uncertainty in control barrier
  functions using gaussian processes,'' in {\em 2021 IEEE International
  Conference on Robotics and Automation}, 2021.

\bibitem{cbf_exponential_Nguyen2016}
Q.~Nguyen and K.~Sreenath, ``Exponential control barrier functions for
  enforcing high relative-degree safety-critical constraints,'' in {\em
  American Control Conference}, pp.~322--328, IEEE, 2016.

\bibitem{gp_textbook_Rasmussen2003}
C.~K. Williams and C.~E. Rasmussen, {\em Gaussian processes for machine
  learning}, vol.~2.
\newblock MIT press Cambridge, MA, 2006.

\bibitem{cbf_highorder_xiao2019}
W.~Xiao and C.~Belta, ``Control barrier functions for systems with high
  relative degree,'' in {\em 2019 IEEE 58th conference on decision and control
  (CDC)}, pp.~474--479, IEEE, 2019.

\bibitem{gp_nonstationary_proofs_Paciorek2003}
C.~J. Paciorek, {\em Nonstationary Gaussian processes for regression and
  spatial modelling}.
\newblock PhD thesis, Carnegie Mellon University, 2003.

\bibitem{math_metricspace_schoenberg1938}
I.~J. Schoenberg, ``Metric spaces and completely monotone functions,'' {\em
  Annals of Mathematics}, pp.~811--841, 1938.

\bibitem{gp_derivatives_observations_Solak2003}
E.~Solak, R.~Murray-Smith, W.~E. Leithead, D.~J. Leith, and C.~E. Rasmussen,
  ``Derivative observations in gaussian process models of dynamic systems,'' in
  {\em Advances in neural information processing systems}, pp.~1057--1064,
  2003.

\bibitem{gp_derivatives_scaling_Eriksson2018}
D.~Eriksson, K.~Dong, E.~Lee, D.~Bindel, and A.~G. Wilson, ``Scaling gaussian
  process regression with derivatives,'' in {\em Advances in Neural Information
  Processing Systems}, pp.~6867--6877, 2018.

\bibitem{gp_multistep_girard2002}
A.~Girard, C.~Rasmussen, J.~Q. Candela, and R.~Murray-Smith, ``Gaussian process
  priors with uncertain inputs application to multiple-step ahead time series
  forecasting,'' {\em Advances in neural information processing systems},
  vol.~15, 2002.

\bibitem{gp_deisenroth2011}
M.~Deisenroth and C.~E. Rasmussen, ``Pilco: A model-based and data-efficient
  approach to policy search,'' in {\em Proceedings of the 28th International
  Conference on machine learning (ICML-11)}, pp.~465--472, Citeseer, 2011.

\bibitem{gp_thesis_deisenroth2010}
M.~P. Deisenroth, {\em Efficient reinforcement learning using Gaussian
  processes}, vol.~9.
\newblock KIT Scientific Publishing, 2010.

\bibitem{crazyflie}
``Crazyflie 2.1 : Bitcraze.'' \url{https://www.bitcraze.io/crazyflie-2-1/}.
\newblock (Last Accessed Sep. 8, 2021).

\bibitem{cbf_teleoperation_Xu2018}
B.~{Xu} and K.~{Sreenath}, ``Safe teleoperation of dynamic uavs through control
  barrier functions,'' in {\em 2018 IEEE International Conference on Robotics
  and Automation}, pp.~7848--7855, 2018.

\bibitem{quadrotor_trajectory_generation_control_Mellinger2012}
D.~W. Mellinger, ``Trajectory generation and control for quadrotors,'' 2012.

\bibitem{crazyflie_ros_hoenig2017}
W.~H{\"o}nig and N.~Ayanian, {\em Flying Multiple UAVs Using ROS}, pp.~83--118.
\newblock Springer International Publishing, 2017.

\end{thebibliography}

\end{document}